\numberwithin{equation}{section}
\theoremstyle{plain}
\newtheorem{theorem}[equation]{Theorem}
\newtheorem{Lemma}[equation]{Lemma}
\newtheorem{Cor}[equation]{Corollary}
\newtheorem{Prop}[equation]{Proposition}
 \theoremstyle{definition}
\newtheorem{Def}[equation]{Definition}
\newtheorem{fact}[equation]{Fact}
\newtheorem{Rem}[equation]{Remark}
\newtheorem{?}[equation]{Problem}
\newtheorem{Ex}[equation]{Example}
\newcommand{\pcce}{\mathtt{pcce}}
\newcommand{\hwce}{\mathtt{hwce}}
\newcommand{\wce}{\mathtt{wce}}
\newcommand{\quadr}{\mathtt{quadr}}
\newcommand{\mmq}{\mathsf{m}}
\newcommand{\nmq}{\mathsf{n}}
\newcommand{\bintoint}{\mathtt{int}}
\newcommand{\increase}{\mathtt{increase}}
\renewcommand{\abs}[1]{\lvert#1\rvert}
\begin{document}

\title{On the equivalence of two post-quantum cryptographic families}

\author[A. Meneghetti]{Alessio Meneghetti}

\address[A. Meneghetti]{University of Trento, Department of Mathematics, \texttt{\textup{alessio.meneghetti@unitn.it}}}

\author[A. Pellegrini]{Alex Pellegrini}

\address[A. Pellegrini]{Eindhoven University of Technology, Department of Mathematics and Computer Science, \texttt{\textup{a.pellegrini@tue.nl}}}

\author[M. Sala]{Massimiliano Sala}

\address[M. Sala]{University of Trento, Department of Mathematics, \texttt{\textup{massimiliano.sala@unitn.it}}}

\begin{abstract}
% The Maximum Likelihood Decoding Problem (MLD) and the Multivariate Quadratic System Problem (MQ) are known to be NP-hard. Their complexity is strictly related to the security of some post-quantum cryptosystems, that is. 
% In this paper we present a polynomial-time reduction from any instance of MLD to an instance of MQ, and viceversa. We also prove a polynomial-time isomorphism between MQ and MLD, albeit not with a closed formula, thus demonstrating the link between the families of code-based and mutivariate-based primitives.
The Maximum Likelihood Decoding Problem (MLD) is known to be NP-hard and its complexity is strictly related to the security of some post-quantum cryptosystems, that is, the so-called code-based primitives.
Analogously, the Multivariate Quadratic System Problem (MQ) is NP-hard and its complexity is necessary for the security of the so-called multivariate-based primitives.
In this paper we present a closed formula for a polynomial-time reduction from any instance of MLD to an instance of MQ, and viceversa. We also show a polynomial-time isomorphism between MQ and MLD, thus demonstrating the direct link between the two post-quantum cryptographic families.
\end{abstract}

\maketitle

% \keywords{Maximum Likelihood Decoding, Quadratic Multivariate Systems, Polynomial-Time Reductions, Code-based Cryptography, Multivariate-based Cryptography}

%%\pacs[JEL Classification]{D8, H51}

% \pacs[MSC Classification]{
% 94B35, %  	Decoding
% 11T06, %  	Polynomials over finite fields
% 68Q17%  	Computational difficulty of problems
% }

%%%%%%%%%%%%%%%%%%%%%%%%%%%%%%%%%%%%%%%%%%%%%%%%%%%%%%%%%%%

\section{Introduction} 
Computationally difficult algebraic problems, e.g. the discrete logarithm problem (DLP) and the integer factorisation problem (IP),  have historically been successfully exploited to construct secure cryptographic protocols, such as e.g. RSA, Diffie-Hellmann and ECDSA. An interested reader can find details on these schemes and the underlying algebraic problems in \cite{menezes2018handbook} or \cite{stinson2019cryptography}. The advent of quantum algorithms, for example Shor's algorithm \cite{shor1994algorithms}, has however incited several established primitives vulnerable to quantum attackers, thus encouraging researchers to design and analyse new families of cryptosystems based on NP-hard problems, since no quantum algorithm is known to be able to efficiently solve them. Post-Quantum Cryptography \cite{bernstein2009} refers to the the class of cryptographic primitives built upon computational problems not readily solvable by quantum computers. Among these, of particular interest we list lattice-based, code-based and multivariate-based cryptosystems, namely, systems whose security is intertwined with the computational complexity of solving problems over lattices, e.g. the shortest vector problem (SVP) or the closest vector problem (CVP) \cite{micciancio2002complexity}, problems based on coding theory, e.g. the Maximum Likelihood Decoding Problem (MLD) \cite{Berlekamp}, and problems over polynomial ideals, e.g. the problem of deciding whether a quadratic Boolean polynomial system admits a solution, usually referred to as Multivariate Quadratic Problem (MQ) \cite{garey}.
\\
This massive research effort is still ongoing, as can be noticed by the wide participation to the NIST standardisation process for post-quantum primitives\footnote{\url{https://csrc.nist.gov/projects/post-quantum-cryptography}, accessed on 2022-01-18.}. The round-3 finalists for key-encapsulation mechanisms (KEM) are Classic McEliece \cite{bernstein2017classic}, CRYSTALS-KYBER \cite{avanzi2017crystals}, NTRU \cite{chen2019NTRU} and SABER \cite{bassoSABER}, while the round-3 finalists for digital signature schemes are CRYSTALS-DILITHIUM \cite{ducas2021crystals}, FALCON \cite{fouque2018falcon} and Rainbow \cite{Rainbow}. Notably, the only round-3 finalists which are not lattice-based are Classic McEliece and Rainbow, which are respectively code-based and multivariate-based. 

\medskip

As briefly stated above, code-based schemes are designed by exploiting computational and decision problems obtained by questions arising from Coding Theory, such as for example the Maximum-Likelihood Decoding problem (MLD) for linear codes. Using the words of Guruswami and Vardy, ``MLD is one of the central (perhaps, the central) algorithmic problems in coding theory" \cite{Guruswami}. 
This problem, given here in Definition \ref{def:MLD}, has been proven to be NP-complete in 1978 by Berlekamp, McEliece and van Tilborg in \cite{Berlekamp}, where the authors provide a reduction from the Three-Dimensional Matching problem for graphs \cite{Karp}. Then, in 1990, MLD was proven to be in P/Poly by Bruck and Naor \cite{Bruck} and by Lobstein \cite{Lobstein}.
Other interesting results on the complexity of MLD were then given by several authors, and an interested reader can find more information in \cite{Berlekamp, Vardy} for the general case and in \cite{Cheng,Gandikota, Guruswami, Peterson} for specific classes of codes.
\\
In the context of post-quantum code-based cryptography, the most famous cryptosystem is that proposed by McEliece in 1978 \cite{mceliece1978public}. This scheme has been studied for over forty years, proving its resilience and security, and it was then used, together with the Niederreiter scheme \cite{niederreiter1986knapsack}, as a building block for the NIST round-3 finalist Classic McEliece. We remark that the security of each of these primitives is related to both MLD and the problem of distinguishing between apparently-random codes and permuted versions of algebraic codes. In this work we focus on the first of these two problems. A presentation of code-based cryptography and the underlying problems can be found in the chapter \textit{Code-based cryptography} by Overbeck and Sendrier in \cite{bernstein2009}.
\\
Before stating MLD, we provide some basics on binary linear codes. 
%\begin{Def}\label{def:hamming}
An $[n,k]$ linear code $C$ is a $k$ dimensional subspace of $\left( \mathbb{F}_q \right)^n$. The parameters $n$ and $k$ are called the \textit{length} and \textit{dimension} of the code. Since we are interested in binary codes, throughout this paper we only consider the case $q=2$, hence with $\mathbb{F}$ we mean $\mathbb{F}_2$ and we often omit the term binary whenever we use the term code. A \textit{codeword} is any vector in the code. A \textit{generator matrix} $G$ of $C$ is a $k\times n$ matrix whose rows span $C$. Similarly, a \textit{parity-check matrix} of $C$ is a generator matrix of the dual of $C$. Due to this definition, a vector $c$ is a codeword if and only if $H\cdot c^{\top} = 0$. We recall that if $G$ is systematic, i.e. $G=\begin{bmatrix}I_k\mid R\end{bmatrix}$, then $H=\begin{bmatrix}-R^{T}\mid I_{n-k}\end{bmatrix}$.
The (Hamming) \textit{weight} of a vector $v$ is the number $\mathrm{w}(v)$ of its non-zero components.
\begin{Def}[MLD]\label{def:MLD}
Let $H=[h_{i,j}]_{i=1,\ldots m,j=1,\ldots,n}$ be an $m\times n$ binary matrix, let $s\in\mathbb{F}^m$ and let $t\leq n$ be a positive integer. Decide whether there is a vector $v \in \mathbb{F}^n$ of weight at most $t$, such that $Hv^\top = s^\top$.
\end{Def}
\noindent We denote with $I_{\mathrm{MLD}}$ a generic instance of MLD, determined by the triple $I_{\mathrm{MLD}} = (H,s,t)$. The values $(n,m)$ determine the memory space required to store $I_{\mathrm{MLD}}$, indeed, we need a total of $nm+m+\left\lfloor\log_2n\right\rfloor+1$ bits to write a given instance $(H,s,t)$. We call $(n,m)$ the complexity parameters of $I_{\mathrm{MLD}}$, and $\abs{I_{\mathrm{MLD}}}=nm+m+\left\lfloor\log_2n\right\rfloor+1$ the size of $I_{\mathrm{MLD}}$.
We can assume that $m\leq n$, since this case has the same hardness as the general case.

\medskip

The second problem we consider is to decide whether a multivariate-quadratic Boolean system admits a solution. 
This problem, known as the ``multivariate quadratic equation system problem" (MQ) is linked to the security of multivariate-based cryptosystems (e.g. Oil\&Vinegar \cite{kpg99}, Rainbow \cite{Rainbow}, GeMSS \cite{GeMSS}). \\
Let $I$ be an ideal of a polynomial ring $\mathcal{R}$ over a field $\mathbb{K}$ and let $\mathbb{E}$ be an extension field of $\mathbb{K}$, we denote by 
$$\mathcal{V}_{\mathbb{E}}(I)= \left\{A \in \mathbb{E}^{\nmq} \mid f(A) = 0 \; \forall f \in I\right\}$$ 
the set of all the zeroes of $I$ in $\mathbb{E}^{\nmq}$.
$\mathcal{V}_{\mathbb{E}}(I)$ is called the \textit{variety} of $I$ over $\mathbb{E}$.
An MQ-system of equations over $\mathbb{F}$ is a set of $\mmq$ polynomial equations of degree at most 2 in $\mathbb{F}[x_1,\ldots,x_{\nmq}]$ of the form:
\begin{equation}\label{eq:polysystem}
S = \left \{
	\begin{array}{c}
		f_1(x_1,\ldots,x_{\nmq}) = 0 \\
		f_2(x_1,\ldots,x_{\nmq}) = 0 \\
		\vdots\\
		f_{\mmq}(x_1,\ldots,x_{\nmq}) = 0 
	\end{array}
	\right .
\end{equation}
where for every $h \in \{1,\ldots,\mmq\}$
\begin{equation}\label{eq: MQ instance}
	f_h(x_1,\ldots,x_{\nmq}) = \sum_{1\le i < j \le \nmq}\gamma_{ij}^{(h)}x_ix_j + \sum_{1\le i \le \nmq}\lambda_i^{(h)}x_i + \delta^{(h)}
\end{equation}
with $\gamma_{ij}^{(h)}, \lambda_i^{(h)}, \delta^{(h)} \in \mathbb{F}$.\\
The \textit{(decision) multivariate quadratic equation system problem} (MQ) can now be stated as
\begin{Def}[MQ]\label{def:mq}
Consider a polynomial system $S = \{f_1,\ldots,f_{\mmq}\}$ as in (\ref{eq:polysystem}) of degree at most 2 over $\mathbb{F}$ and let $I$ be the ideal generated by $S$. \\
Decide whether $\mathcal{V}_{\mathbb{F}}(I)$ is non-empty.
\end{Def}
\noindent We denote with $I_{\mathrm{MQ}}$ a generic instance of MQ, determined by the polynomial system $S$.
Similarly to the case of MLD, the values $(\nmq,\mmq)$ determine the memory space required to store an instance $I_{\mathrm{MQ}}$. In this case, we need a total of $\mmq\left(\binom{\nmq}{2}+\nmq+1\right)$ bits to write $S$. We call $(\nmq,\mmq)$ the complexity parameters of $I_{\mathrm{MQ}}$, and $\abs{I_{\mathrm{MQ}}}=\mmq\left(\binom{\nmq}{2}+\nmq+1\right)$ its size.

MQ has been proven to be NP-hard over any field \cite{garey}, and many cryptosystems rely their security on such problem \cite{kpg99,mi88,pat96}.
Several mathematical approaches have been employed to tackle this problem, such as the Newton and the tensor-based algorithms \cite{dennis,schnabel}, Gr{\"o}bner bases, resultants and eigenvalues/eigenvectors of companion matrices \cite{cox}, semidefinite relaxations \cite{bucero,lasserre,parrillo}, numerical homotopy \cite{li,verschelde}, low-rank matrix recovery \cite{davenport}, and symbolic computation \cite{grigoriev}.
 
The most established method to perform cryptanalysis of public-key systems is to focus on the algebraic problems underlying them. In this work we look at it from a slightly different perspective, establishing a link between MLD and MQ, and thus providing new directions in the analysis of both code-based and multivariate-based primitives.
More precisely, the aim of this paper is to show explicit reductions between the two previous problems. Since both are NP-complete problems, one might be reduced to the other, but it is not obvious how to do it explicitly without losing their algebraic nature.\\
MQ and MLD are problems of a purely algebraic nature, naturally stated and studied in the context of vector spaces and polynomial rings over $\mathbb{F}_2$, the smallest possible field.
Interestingly, their complexity and the complexity of the numerous related problems (including search problems) is at the heart of research for the mathematical community working in coding theory and cryptography. Yet, known results about their complexity are obtained via techniques of a rather different nature, such as graph theory.
As far as we know, this is the first paper that investigates their direct explicit complexity links, using only languages and tools familiar to standard research in coding theory and cryptography alike.
To be more precise, we will establish in Section \ref{sec:MLD-MQ} an explicit reduction from MLD to MQ, while in Section \ref{sec:MQ-MLD} we will present a reduction from MQ to MLD.
The remainder of this paper contains Section \ref{sec: preliminaries}, where we provide preliminaries and our notation, Section \ref{sec: open problems}, where
we leave some open problems, and notably Section \ref{sec: conclusion}. In this latter section we draw some significant conclusions, among which we report our proof of the existence of a polynomial-time isomorphism between NP and MQ, and thus between code-based and multivariate-based primitives.

\section{Preliminary results and definitions}\label{sec: preliminaries}
Before introducing the reductions between MLD and MQ we need some preliminary notation.
Throughout this paper we consider vectors to be row vectors, unless otherwise specified. Moreover, we denote with $\bar{\cdot}$ each element of a set which is not a variable (regardless it being an element of fields or vector spaces), whereas without the notation $\bar{\cdot}$ we mean variables. As an example, if $f(x)$ is a polynomial in the variable $x$ then $f(\bar{x})$ is to be considered as the evaluation of $f$ at the point $\bar{x}$. In Section \ref{sec:MQ-MLD} we will also use the notation $\widehat{\cdot}$ and $\widetilde{\cdot}$ instead of $\bar{\cdot}$ to distinguish between elements belonging to distinct spaces. As an example, in Section \ref{sec:MQ-MLD} we will define two distinct sets $\widehat{\Sigma}$ and $\widetilde{\Sigma}$, which are somewhat linked. In this case, to distinguish between their elements, we will use $\widehat{v}\in\widehat{\Sigma}$ and $\widetilde{v}\in\widetilde{\Sigma}$.
\\
%\begin{Def}
Let $l$ be a positive integer.
We define the map $\bintoint:\mathbb{F}^l \rightarrow \mathbb{Z}$ as
\begin{equation}\nonumber
	\bintoint(a)=\bintoint\left(\left(a_1,\ldots,a_l\right)\right):=\sum_{j=1}^l{a_j \cdot 2^{j-1}}\; ,
\end{equation}
where the sum on the right-hand side is over the integers.
In this way, $\bintoint(\bar{a})$ is the integer value corresponding to the input vector of bits $\bar{a}=(\bar{a}_1,\ldots,\bar{a}_l)\in\mathbb{F}^l$.
When we regard a vector in $\mathbb{F}^l$ as the binary representation of an integer, we list its bits from the least-significant to the most-significant, e.g. if $l=4$ then $3=(1,1,0,0)$. From now on we will often use the binary representation of the parameter $t$ of an MLD instance $(H,s,t)$ as a vector of $\ell=\lfloor\log_2n\rfloor+1$ bits $(t_1,\ldots,t_{\ell})$, and therefore we will use both $\mathrm{w}(v)\leq t$ and $\mathrm{w}(v)\leq \bintoint(t_1,\ldots,t_{\ell})$.
\\
Let $\textbf{1}$ be the vector $(1,\ldots,1)$. 
We define the map $\increase:\mathbb{F}^l \rightarrow \mathbb{F}^l$ such that, for all $\bar{a}\neq \textbf{1}$, we have
\begin{equation}\nonumber
\bintoint(\increase(\bar{a})) = \bintoint(\bar{a}) +1\;,
\end{equation}
namely, if $\bar{a}=\left(\bar{a}_1,\ldots, \bar{a}_z,\ldots,\bar{a}_l\right)$, where $\bar{a}_z=0$ is the left-most 0 bit of $\bar{a}$, then 
\begin{equation}\nonumber
	\increase(\bar{a}) = \begin{cases}
	       \begin{array}{ll}
	           \left(0,\ldots,0,1,\bar{a}_{z+1},\ldots,\bar{a}_l\right)  & \text{ if } \bar{a} \neq \textbf{1} \\
	           \bar{a}  & \text{ if } \bar{a} = \textbf{1}\;.
	       \end{array}

	\end{cases}
\end{equation}

Finally, we introduce the projection and truncation maps $\pi$ and $\tau$. Let $i\leq l$ be a non-negative integer, and let $\pi_i:\mathbb{F}^l \rightarrow \mathbb{F}^l$
the projection defined as
$$\pi_i\left(\bar{v}_1,\ldots,\bar{v}_l \right) = \left(\bar{v}_1,\ldots,\bar{v}_i,0,\ldots,0 \right) $$
for $1\leq i\leq l$ and
$$
\pi_0\left(\bar{v}_1,\ldots,\bar{v}_l \right) = \left(0,\ldots,0 \right)\;.
$$
Similarly, let $i\leq l$ be a positive integer. We define the truncation $\tau_i:\mathbb{F}^l \rightarrow  \mathbb{F}^i$
as
$$\tau_i\left(\bar{v}_1,\ldots,\bar{v}_l \right) = \left(\bar{v}_1,\ldots,\bar{v}_i\right)\;. $$
Consider now a polynomial equation $f=0$, where $f \in \mathbb{F}[x_1,\ldots,x_l]$ with $\mathrm{deg}(f) = d >2$. The goal of the algorithm described in the following fact is to reduce this equation to a set of equations of degree at most $2$. We use an idea similar to that described by Kipnis and Shamir \cite{kipnis} in their \textit{relinearization} technique. 

\begin{fact}
\label{rem:quadratize}
Let $x_{i_1}x_{i_2}\cdots x_{i_d}$ be a monomial with degree $d$. We introduce a set of new $d-2$ variables as follows
\begin{equation}\nonumber
\left\{\begin{array}{ll}
y_1=x_{i_1}x_{i_2}\\
y_2=y_1x_{i_3}\\
\vdots \\
y_{d-2}=y_{d-3}x_{i_{d-1}}
\end{array}
\right.
\end{equation}
and thus rewrite $x_{i_1}x_{i_2}\cdots x_{i_d}$ as $y_{d-2}x_{i_d}$. With this procedure, a monomial of degree $d$ is substituted by a set of $d-1$ quadratic equations by introducing  $d-2$ variables.

By applying the same argument to each monomial of $f$, we obtain a system of quadratic equations, as required.
\end{fact}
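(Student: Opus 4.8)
\emph{Proof proposal.} The plan is to verify that the displayed chain of substitutions produces, from the equation $f=0$, a system of degree-$\le 2$ equations that is \emph{equisatisfiable} with $f=0$ — and in fact to show something slightly stronger, namely that projecting onto the original variables gives a bijection between the two solution sets, which is what distinguishes this construction from plain relinearization (there one simply forgets the quadratic constraints, here one keeps them). I would first treat the single-monomial statement and then argue that the construction composes over all monomials of $f$.

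For a single monomial $g=x_{i_1}x_{i_2}\cdots x_{i_d}$ — which over $\mathbb{F}$ we may take to be squarefree since $x^2=x$ — set $y_0:=x_{i_1}$ and consider the equations $y_1-x_{i_1}x_{i_2}=0$, $y_j-y_{j-1}x_{i_{j+1}}=0$ for $2\le j\le d-2$, together with the rewriting of $g$ as $y_{d-2}x_{i_d}$. The first thing to check is the degree bound: each equation $y_j-y_{j-1}x_{i_{j+1}}$ has degree at most $2$ and $y_{d-2}x_{i_d}$ has degree exactly $2$, so the new system has degree at most $2$ as claimed, using $d-2$ fresh variables $y_1,\dots,y_{d-2}$, and $d-1$ quadratic equations in total once $g$ itself is regarded as a standalone equation $g=0$ (namely the $d-2$ chain equations plus $y_{d-2}x_{i_d}=0$).

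The core step is correctness of the substitution, which I would establish by a short induction along the chain: for \emph{every} assignment $\bar x\in\mathbb{F}^l$ there is a \emph{unique} tuple $\bar y_1,\dots,\bar y_{d-2}$ satisfying the chain equations, obtained by $\bar y_1=\bar x_{i_1}\bar x_{i_2}$ and $\bar y_j=\bar y_{j-1}\bar x_{i_{j+1}}$, and for this tuple one gets $\bar y_j=\bar x_{i_1}\cdots\bar x_{i_{j+1}}$ by induction, hence in particular $\bar y_{d-2}\bar x_{i_d}=\bar x_{i_1}\cdots\bar x_{i_d}=g(\bar x)$. Therefore $(\bar x,\bar y)$ satisfies all chain equations and $y_{d-2}x_{i_d}=0$ precisely when $g(\bar x)=0$; this gives the desired bijection between the solution set of the quadratic system and $\{\bar x:g(\bar x)=0\}$, and more generally shows that replacing an occurrence of $g$ inside any polynomial by $y_{d-2}x_{i_d}$ does not change its value on assignments that respect the chain.

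Finally, for a general $f$ of degree $d>2$, I would apply the construction monomial by monomial: for each monomial of degree $>2$ introduce a set of auxiliary variables disjoint from those used for the other monomials (monomials of degree $\le 2$ are left alone), collect all the resulting chain equations, and replace each high-degree monomial in $f$ by its quadratic surrogate. Because distinct monomials use disjoint auxiliary variables, the uniqueness-of-extension argument applies to all of them simultaneously, so the resulting quadratic system is equivalent to $f=0$ in the strong sense above, which is what is required. The only points that need care are the bookkeeping — pinning down the counts $d-2$ and $d-1$ and keeping the auxiliary variables of different monomials from clashing — and noting that the number of added variables and equations is at most linear in $\deg f$ times the number of monomials of $f$, hence polynomial in the bit-length of $f$; there is no genuine mathematical obstacle.
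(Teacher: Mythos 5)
Your proposal is correct and follows the same chain-substitution idea the paper uses: the statement is given as a \emph{Fact} without a formal proof, and your induction showing that the $\bar y_j$ are uniquely determined by $\bar x$ and equal the partial products $\bar x_{i_1}\cdots \bar x_{i_{j+1}}$ (hence that the quadratic system is equisatisfiable with $f=0$, indeed with a bijection of solution sets under projection) is exactly the routine verification the paper leaves implicit. Your bookkeeping of $d-2$ new variables and $d-1$ quadratic equations per high-degree monomial, and the use of disjoint auxiliary variables for distinct monomials, also matches the worked example the paper gives immediately after the Fact.
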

For convenience, given a polynomial equation $f=0$ we denote with $\quadr(f)$ the quadratic polynomial system obtained by applying the procedure in Fact \ref{rem:quadratize} to $f$. Similarly, given a polynomial system $S$, we denote with $\quadr(S)$ the quadratic system obtained by applying the procedure to each of the polynomials in $S$ and joining all the systems of equations, formally
\begin{equation}
    \quadr(S) = \bigcup_{f\in S}\quadr(f).
\end{equation}
\begin{Ex}
    Let $S=\{f_1,f_2\}$ be the polynomial system
    $$
    S=
    \left\{
    \begin{array}{l}
    f_1= x_1x_2x_4+x_1x_3x_4+x_2x_3+x_1=0\\
    f_2=x_1x_2x_3x_4+1=0
    \end{array}
    \right.
    $$
    To compute $\quadr(S)$ we start with computing $\quadr(f_1)$. There are two monomials of degree larger than $2$ in $f_1$, namely $x_1x_2x_4$ and $x_1x_3x_4$, which have degree $d_1=3$. When dealing with a monomial with degree $d_1=3$, as stated in Fact \ref{rem:quadratize}, we introduce $d_1-2=1$ variable and then we obtain a set of $d_1-1=2$ quadratic equations (including the rewriting of the starting equation $f_1=0$ in terms of the new variables).
    \\
    We start from $x_1x_2x_4$, we call the new variable $y_1$, and we create the quadratic equation $y_1=x_1x_2$. Due to this new equation, we can substitute $x_1x_2$ with $y_1$ into $f_1=0$, obtaining
    $$
    f_1=0 \quad\Leftrightarrow\quad
    \left\{
    \begin{array}{l}
    y_1=x_1x_2\\
    y_1x_4+x_1x_3x_4+x_2x_3+x_1=0
    \end{array}
    \right.
    $$
    Similarly, we deal with the monomial $x_1x_3x_4$ by introducing $y_2=x_1x_3$. The system becomes
    $$
    \quadr(f_1)=\left\{
    \begin{array}{l}
    y_1=x_1x_2\\
    y_2=x_1x_3\\
    y_1x_4+y_2x_4+x_2x_3+x_1=0
    \end{array}
    \right.
    $$
    We proceed now with the second equation. In $f_2$ only $x_1x_2x_3x_4$ has degree larger than $2$, and in this case we need to introduce two variables $z_1,z_2$ and transform $f_2=0$ into a system of $3$ quadratic equations:
    $$
    \quadr(f_1)=
    \left\{
    \begin{array}{l}
    z_1=x_1x_2\\
    z_2=z_1x_3
    z_2x_4+1=0
    \end{array}
    \right.
    $$
    Thus $\quadr(S) = \quadr(f_1) \cup \quadr(f_2)$ is a  quadratic system equivalent to $S$.
\end{Ex}

\begin{Def}\label{def:standardform}
A system of equations is said to be in \textit{standard form} if
\begin{itemize}
    \item it contains equations of the form $xy + z =0$ which do not share any variable;
    \item it contains linear equations with up to three monomials, that is, of the form $x+\bar{\delta}=0$, $x+y+\bar{\delta}=0$ or $x+y+z+\bar{\delta}=0$, with $\bar{\delta}\in\mathbb{F}$;
    \item each variable that appears in a linear equation appears also in exactly one quadratic equation;
    \item it does not contain any other kind of equation.
\end{itemize}
\end{Def}

In Lemma \ref{lem:systostandard} we will show that any system of quadratic equations can be brought to standard form by adding (a bounded number of) new variables and equations. A preliminary result for the case of linear equations is the following:
\begin{fact}\label{fact: linear standard}
Let us consider a linear equation with $l$ variables
$$
x_1+x_2+x_3+\ldots+x_l=0 \;.
$$
If we define $y_i$ to be the sum of the the first $i$ variables $x_1+\ldots+ x_i$, then $x_1+x_2+x_3+\ldots+x_l=0$ is equivalent to the linear system
$$
\left\{
\begin{array}{l}
x_1+x_2+y_2=0  \\
y_2+x_3+y_3=0  \\
y_3+x_4+y_4=0\\
\vdots\\
y_{l-3}+x_{l-2}+y_{l-2}=0\\
y_{l-2}+x_{l-1}+x_{l}=0
\end{array}
\right.
$$
which has $l-2$ equations, each one involving exactly three variables, and a total of $2l-3$ variables.
\end{fact}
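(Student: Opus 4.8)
The plan is to establish the asserted \emph{equivalence} in the sense that matters for a reduction, namely that the original equation is satisfiable if and only if the displayed system is, via an explicit and in fact bijective correspondence between solution sets. First I would settle the bookkeeping claims. The system involves the $l$ original variables $x_1,\dots,x_l$ together with the fresh variables $y_2,\dots,y_{l-2}$, hence $l+(l-3)=2l-3$ variables in total (all distinct, the $y_i$ being new); and it consists of the first equation $x_1+x_2+y_2=0$, the $l-4$ telescoping equations $y_i+x_{i+1}+y_{i+1}=0$ for $2\le i\le l-3$, and the last equation $y_{l-2}+x_{l-1}+x_l=0$, for a total of $l-2$ equations, each manifestly involving exactly three distinct variables. (The statement is intended for $l\ge 4$; for $l\le 3$ the original equation already has the shape required by Definition~\ref{def:standardform} and nothing has to be done.)

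For the forward implication, given $\bar x=(\bar x_1,\dots,\bar x_l)$ with $\sum_{j=1}^l\bar x_j=0$, I would set $\bar y_i:=\sum_{j=1}^i\bar x_j$ for $2\le i\le l-2$ and verify each equation by direct substitution, using that over $\mathbb F$ every term that occurs twice cancels: the first equation becomes $\bar x_1+\bar x_2+(\bar x_1+\bar x_2)=0$; the telescoping equation indexed by $i$ becomes $\bigl(\sum_{j\le i}\bar x_j\bigr)+\bar x_{i+1}+\bigl(\sum_{j\le i+1}\bar x_j\bigr)=0$; and the last equation becomes $\bigl(\sum_{j\le l-2}\bar x_j\bigr)+\bar x_{l-1}+\bar x_l=\sum_{j=1}^l\bar x_j=0$, which is exactly the hypothesis. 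Conversely, given any solution $(\bar x_1,\dots,\bar x_l,\bar y_2,\dots,\bar y_{l-2})$ of the system, the first equation forces $\bar y_2=\bar x_1+\bar x_2$, and an immediate induction on $i$ using $\bar y_{i+1}=\bar y_i+\bar x_{i+1}$ (read off from the telescoping equations) gives $\bar y_i=\sum_{j=1}^i\bar x_j$ for all $i\le l-2$; plugging this into the last equation yields $\sum_{j=1}^l\bar x_j=0$, so $(\bar x_1,\dots,\bar x_l)$ solves the original equation. Since the $\bar y_i$ are forced, the extension of a solution of the original equation to one of the system is unique, so the correspondence is a bijection after forgetting the $y_i$.

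I do not expect a genuine obstacle; the argument is just the telescoping identity $\sum_{j\le i}\bar x_j+\bar x_{i+1}+\sum_{j\le i+1}\bar x_j=0$ over $\mathbb F$. The only points requiring a little care are the index range of the telescoping block, which is empty exactly when $l=4$ (so the system is then simply $\{x_1+x_2+y_2=0,\ y_2+x_3+x_4=0\}$), and checking that the final fresh variable is $y_{l-2}$ — so that the last equation reads $y_{l-2}+x_{l-1}+x_l=0$ and not $\cdots+y_{l-1}+x_l=0$ — which is what makes the variable count come out to $2l-3$ rather than $2l-2$.
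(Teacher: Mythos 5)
Your proof is correct and is exactly the telescoping argument the paper has in mind; the paper states this as a Fact without writing out the verification, and your forward/backward check (partial sums forced by the equations, characteristic-2 cancellation) together with the counts of $l-2$ equations and $2l-3$ variables fills it in as intended. The remarks on the small cases $l\le 3$ and the empty telescoping block at $l=4$ are a sensible addition, not a deviation.
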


\subsection{Our Computational Model}
To establish reductions' performances between MLD and MQ we need a computational model that defines the computational cost of operations performed in the reduction. In our case, the set of operations we need in order to perform a reduction are the sum and multiplication in the finite field $\mathbb{F}$. The two operations can be identified with the $\mathrm{OR}$ and $\mathrm{AND}$ logical operators, respectively, to which we assign cost $1$. We also need to carefully consider the memory requirements of our method, so we assign cost $1$ to every coefficient required to express a single polynomial. Notice that the number of bits required to completely define a single quadratic polynomial is approximately the square of the number of variables.
\\
In Section \ref{sec:MLD-MQ} we will present a reduction $\alpha: \mathrm{MLD}\to \mathrm{MQ}$, whose memory analysis will be based on the number of equations and variables required to describe the related polynomials (i.e. the complexity parameters of the resulting $\mathrm{MQ}$ instance).
In Section \ref{sec:MQ-MLD} we present a reduction $\beta: \mathrm{MQ}\to \mathrm{MLD}$ whose memory analysis is based on the  size of the generated parity-check matrix in the reduction (i.e. the complexity parameters of the resulting $\mathrm{MLD}$ instance).
\\
The following lemma, together with Fact \ref{fact: linear standard}, implies that every instance of MQ can  be reduced to an instance in which the polynomial system is given in standard form.

\begin{Lemma}\label{lem:systostandard}
	Consider a polynomial system $S = \lbrace f_1,\ldots,f_{\mmq} \rbrace$ with $f_i \in \mathbb{F}[x_1,\ldots,x_{\nmq}]$ and $\mathrm{deg}(f_i)=2$ for each $i=1,\ldots,\mmq$. $S$ can be taken to standard form in $\mathcal{O}(\mmq\nmq^2)$ operations.\\
	More precisely, the number of quadratic equations is at most 
$\mmq\left(\frac{\nmq(\nmq-1)}{2}\right)$
	and the number of linear equations is at most
	$\mmq\left(\frac{3\nmq^2-\nmq}{2}-2\right)$.
\end{Lemma}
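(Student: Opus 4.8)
The plan is to process the equations $f_1,\dots,f_{\mmq}$ of $S$ one at a time. For each $f_h=0$ I build a standard-form subsystem $S_h$ on the variables $x_1,\dots,x_{\nmq}$ (or private copies thereof) together with finitely many fresh auxiliary variables, in such a way that a point of $\mathbb{F}^{\nmq}$ extends to a solution of $S_h$ if and only if it solves $f_h=0$; then I set $S'=\bigcup_{h}S_h$, adding only short linear ``identification chains'' forcing the copies of each $x_i$ used in different blocks to agree. Since every operation is local to one $f_h$ and linear-algebraic in nature, the running time and the equation counts follow by multiplying the per-polynomial bounds by $\mmq$ (up to lower-order corrections for the identification chains), which is why the statement is phrased with ``at most''.

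Fix $f=f_h$ as in (\ref{eq: MQ instance}) and let $M=\{(i,j):i<j,\ \gamma^{(h)}_{ij}=1\}$, so $|M|\le\binom{\nmq}{2}$. \textbf{Step 1 (linearising the quadratic part).} For every $(i,j)\in M$ introduce three fresh variables $p_{ij},q_{ij},z_{ij}$ and the equation $p_{ij}q_{ij}+z_{ij}=0$; these $|M|$ equations are pairwise variable-disjoint by construction, so they satisfy the first clause of Definition \ref{def:standardform}. \textbf{Step 2 (gluing the copies).} I must force $p_{ij}$ to equal ``$x_i$'' and $q_{ij}$ to equal ``$x_j$'' without putting $x_i$ into several quadratic equations. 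I therefore keep the $p_{ij},q_{ij}$ distinct and, for each index $i$, chain together the copies of $x_i$ by two-monomial equations $u+u'=0$; this costs at most one equation per copy, hence at most $2|M|\le\nmq(\nmq-1)$ linear equations per polynomial, and every variable appearing in them lies in exactly one quadratic equation of Step 1, as the third clause of Definition \ref{def:standardform} demands. \textbf{Step 3 (the residual linear equation).} After the substitutions, $f=0$ turns into the single linear equation $\sum_{(i,j)\in M}z_{ij}+\sum_{i:\lambda^{(h)}_i=1}\widehat{x}_i+\delta^{(h)}=0$, where $\widehat{x}_i$ is the chosen copy of $x_i$; it has at most $\binom{\nmq}{2}+\nmq=\tfrac{\nmq^2+\nmq}{2}$ variables, and Fact \ref{fact: linear standard} splits it into at most $\tfrac{\nmq^2+\nmq}{2}-2$ three-monomial linear equations, the new chain variables being hooked to a quadratic equation of Step 1 exactly as the $z_{ij}$ are. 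Summing the two linear contributions gives at most $\nmq(\nmq-1)+\big(\tfrac{\nmq^2+\nmq}{2}-2\big)=\tfrac{3\nmq^2-\nmq}{2}-2$ linear equations and at most $\binom{\nmq}{2}$ quadratic equations for $S_h$; multiplying by $\mmq$ yields the two claimed bounds.

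For the cost, building $S_h$ only reads the $\mathcal{O}(\nmq^2)$ coefficients of $f_h$, creates $\mathcal{O}(\nmq^2)$ fresh variables, and performs $\mathcal{O}(\nmq^2)$ field additions and multiplications for the substitutions, the chaining, and the invocation of Fact \ref{fact: linear standard}; each new equation carries $\mathcal{O}(1)$ coefficients, so the reduction runs in $\mathcal{O}(\mmq\nmq^2)$ operations in our computational model. The step I expect to be the main obstacle is not any single transformation but the simultaneous bookkeeping in Steps 2--3: the private copies of the $x_i$'s must be chosen so that (i) the $xy+z=0$ equations stay pairwise variable-disjoint, (ii) every variable that ends up in a linear equation is attached to exactly one quadratic equation, and (iii) the totals stay within $\binom{\nmq}{2}$ quadratic and $\tfrac{3\nmq^2-\nmq}{2}-2$ linear equations per polynomial. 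Treating the residual linear equation via Fact \ref{fact: linear standard} (rather than ad hoc) and stitching the $\mmq$ blocks together on $x_1,\dots,x_{\nmq}$ with only lower-order many additional equations is precisely what makes the counts come out as stated; the rest is the routine check that the variety of $S'$ projects onto $\mathcal{V}_{\mathbb{F}}(I)$.
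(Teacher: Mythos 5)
Your proof is correct and follows essentially the same route as the paper's: one fresh variable per quadratic monomial, variable-disjointness of the $xy+z=0$ equations enforced by short linear identification equations ($2\binom{\nmq}{2}=\nmq(\nmq-1)$ of them per polynomial), the residual linear equation split via Fact \ref{fact: linear standard}, and the per-polynomial counts multiplied by $\mmq$. The only difference is cosmetic (you make the quadratic equations disjoint from the outset with private copies $p_{ij},q_{ij}$ and glue afterwards, whereas the paper first writes $x_{ij}+x_ix_j=0$ and then de-duplicates the shared $x_i$'s), and your caveat about the cross-block identification chains being a lower-order addition not absorbed in the stated bound applies equally to the paper's own tracking equations $X_{h,k}+x_k=0$.
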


\begin{proof}
Assume first $S=\lbrace f \rbrace$. In the worst case 
\begin{equation}\nonumber
f = \sum_{i< j \in\{1,\ldots,\nmq\}}x_ix_j + \sum_{i\in\{1,\ldots,\nmq\}}x_i + \bar{\delta}\;,
\end{equation}
with $\bar{\delta}\in\mathbb{F}$. Clearly, $f$ has $\binom{\nmq}{2}=\frac{\nmq(\nmq-1)}{2}$ quadratic monomials $x_ix_j$. As a first step, we introduce $\binom{\nmq}{2}$ new variables, along with a set of $\binom{\nmq}{2}$ equations of the form $x_{ij} +x_ix_j = 0$. These newly introduced variables are then substituted into $f$, obtaining in this way a linear polynomial $f'$ with $\frac{\nmq(\nmq+1)}{2}$ variables.
\\
Notice that the quadratic equations we just introduced share some variables $x_i$: for each $i$, $x_i$ appears indeed in $\nmq-1$ such equations. However, our aim is to have variables in degree-2 monomials appearing in exactly one monomial. This is achieved in a second step by introducing new variables and new linear equations: if $x_i$ appears in both $x_{1,i}+x_1x_i=0$ and $x_{2,i}+x_2x_i=0$ then we define a new variable $x_i'$ and write
$$
\left\{
\begin{array}{lll}
x_{1,i}+x_1x_i&=&0\\
x_{2,i}+x_2x_i'&=&0\\
x_i+x_i'=&0&\;.
\end{array}
\right.
$$
By doing this for all shared variables, we add a set of $\nmq(\nmq-1)$ linear equations of the form $x_i'+x_i=0$, each one introducing a new variable.  
\\
With the first step we have produced a linear polynomial $f'$ with $\frac{\nmq(\nmq+1)}{2}$ variables. However, for the system to be in standard form, each linear equation has to involve at most three variables. As in Fact \ref{fact: linear standard}, this can be done in a third step by substituting $f'$ with $\frac{\nmq(\nmq+1)}{2}-2$ linear equations involving each one three variables, for a total of $2\left(\frac{\nmq(\nmq+1)}{2}\right)-3$ variables.
\\
We end up with a set of $\frac{\nmq(\nmq-1)}{2}$ quadratic equations and a set of $\frac{3\nmq^2-\nmq}{2}-2$ linear equations. The total number of variables is $\frac{5\nmq^2-\nmq}{2}-3$.

Now let $S$ contain $\mmq$ equations. We perform the same transformation as above for each of them, however the sets of quadratic equations of the polynomials might share variables. To solve this problem we rename the variables of each quadratic polynomial: if the $h$-th polynomial contains the variable $x_k$ then we substitute $x_k$ with a new variable $X_{h,k}$. We also need to track this substitution, therefore add a new set of linear equations $X_{h,k} + x_k = 0$ for $h=1,\ldots,\mmq$ and $k=1,\ldots,\nmq$.
In this way, the quadratic equations do not share variables, and we can substitute each one of them with a system in standard form, as we did in the first part of this proof. As a consequence, the total number of quadratic equations is bounded by $\mmq\left(\frac{\nmq(\nmq-1)}{2}\right)$ and the number of linear equations is at most $\mmq\left(\frac{3\nmq^2-\nmq}{2}-2\right)$. Similarly, the number of variables is bounded by $\mmq\left(\frac{5\nmq^2-\nmq}{2}-3\right)$. 

Putting everything together, we obtain $\mathcal{O}(\mmq\nmq^2)$ new variables and equations.	 
\end{proof}

\section{MLD to MQ reduction}\label{sec:MLD-MQ}
In this section we provide an explicit reduction $\alpha$, which maps an instance $I_{\mathrm{MLD}}$ of the MLD problem to an instance $I_{\mathrm{MQ}}$ of the MQ problem. 
More precisely, for any pair of complexity parameters $(n,m)$ we are going to define a reduction $\alpha_{n,m}$, which deals with binary codes of length $n$ and dimension at least $n-m$.
\\
An MLD instance $I_{\mathrm{MLD}} = (\bar{H},\bar{s},\bar{t})$ can be thought of as the union of two requirements:
\begin{enumerate}
    \item \textit{parity-check constraint}; the solution $\bar{v}$ has to satisfy $\bar{H}\bar{v}^{\top}=\bar{s}^{\top}$;
    \item \textit{weight constraint}; the solution $\bar{v}$ has to satisfy $\mathrm{w}(\bar{v})\leq \bar{t}$. To obtain our reduction we split this constraint in two parts: $\mathrm{w}(\bar{v})=w$ and $w\leq \bar{t}$. 
\end{enumerate}

We propose three encodings, each one parametrised by the complexity parameters $m$ and $n$ of MLD, which together correspond to a reduction from MLD to MQ.
Here the term \textit{encoding} has nothing to do with the mapping of a message to a codeword, instead it is the the rewriting of a constraint in terms of quadratic equations.
The set of quadratic Boolean polynomials $\pcce_{n,m}$ is the encoding of the parity-check constraint $\bar{H}v^\top=\bar{s}^\top$. A complete description is provided in Section \ref{subsec:paritycheck}. The polynomial system $\hwce_{n,m}$, detailed in Section \ref{subsec:wcomp}, corresponds to the Hamming weight computation of $\bar{v}$. The third encoding $\wce_{n,m}$, in Section \ref{subsec:wcheck}, is a polynomial system corresponding to the weight constraint $\mathrm{w}(v)\leq \bar{t}$.

We define the map $\alpha_{n,m}=\pcce_{n,m} \cup \hwce_{n,m}\cup\wce_{n,m}$, where we mean that, given a specific instance $I_{\mathrm{MLD}}= (\bar{H},\bar{s},\bar{t})$, the actual reduction is given by $I_{\mathrm{MQ}}=\alpha_{n,m}(I_{\mathrm{MLD}})=\pcce_{n,m}(\bar{H},\bar{s},\bar{t}) \cup \hwce_{n,m}(\bar{H},\bar{s},\bar{t}) \cup \wce_{n,m}(\bar{H},\bar{s},\bar{t})$.
\\
Observe that $\alpha_{n,m}$ is a system of polynomial equations depending only on $n$ and $m$, and whose evaluation on a specific MLD instance gives us an MQ instance.

\subsection{Parity-check constraint encoding}\label{subsec:paritycheck}
We claim that the parity-check matrix constraint $Hv^{\top}=s$ is equivalent to a set of $m$ linear equations corresponding to polynomials $$f_i \in \mathbb{F}[h_{i,1},\ldots,h_{i,n},v_1,\ldots,v_n, s_i]$$ of the form $f_i=\sum_{j=1}^n h_{i,j}v_j + s_i$. 
Indeed, $\bar{H}\bar{v}^{\top}=\bar{s}$ if and only if $f_i(\bar{H},\bar{v},\bar{s})=0$ for every $1\leq i\leq m$.
Thus, we can define $\pcce_{n,m}$ as
\begin{equation}\nonumber
\lbrace f_i=0 \rbrace_{i=1,\ldots, m}\;,
\end{equation}
and so
$$
\pcce_{n,m}(\bar{H},\bar{s},\bar{t})=\{f_i(\bar{H},v,\bar{s})=0\}_{i=1,\ldots,m}\;.
$$
Observe that $f_i(\bar{H},v,\bar{s})$ belongs to $\mathbb{F}[v_1,\ldots,v_n]$.

We explicitly state the following trivial result for completeness.
\begin{Lemma}\label{lem: complexity pcce}
Let $I_{\mathrm{MLD}}$ be an instance with complexity parameters $n$ and $m$.
Then, $\pcce_{n,m}(I_{\mathrm{MLD}})$ contains $m$ linear equations in $n$ variables.
\end{Lemma}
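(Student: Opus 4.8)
The plan is to simply unwind the definition of $\pcce_{n,m}$ and of its evaluation on a concrete instance. First I would fix an arbitrary instance $I_{\mathrm{MLD}} = (\bar{H},\bar{s},\bar{t})$ with complexity parameters $n$ and $m$, so that $\bar{H}$ is an $m\times n$ binary matrix and $\bar{s}\in\mathbb{F}^m$. By the construction in Section \ref{subsec:paritycheck}, $\pcce_{n,m}$ is the family $\{f_i=0\}_{i=1,\ldots,m}$ with $f_i=\sum_{j=1}^n h_{i,j}v_j + s_i$, and substituting the concrete entries of $\bar{H}$ and $\bar{s}$ yields $\pcce_{n,m}(I_{\mathrm{MLD}})=\{f_i(\bar{H},v,\bar{s})=0\}_{i=1,\ldots,m}$, where $f_i(\bar{H},v,\bar{s})=\sum_{j=1}^n \bar{h}_{i,j}v_j + \bar{s}_i$.

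Then I would make two elementary observations. The index $i$ ranges over $1,\ldots,m$, one value per row of $\bar{H}$, so $\pcce_{n,m}(I_{\mathrm{MLD}})$ consists of exactly $m$ equations. Moreover each polynomial $f_i(\bar{H},v,\bar{s})$ lies in $\mathbb{F}[v_1,\ldots,v_n]$ and is an $\mathbb{F}$-linear combination of $v_1,\ldots,v_n$ plus the constant term $\bar{s}_i\in\mathbb{F}$; hence it has degree at most $1$, i.e.\ it is a linear equation in the $n$ variables $v_1,\ldots,v_n$. Combining the two observations gives the statement.

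I do not expect any genuine obstacle here: the lemma is essentially a bookkeeping restatement of the encoding, and the only point worth spelling out is that replacing the symbols $h_{i,j}$ and $s_i$ by elements of $\mathbb{F}$ neither raises the degree of the $f_i$ nor alters the underlying variable set $\{v_1,\ldots,v_n\}$, which is immediate. (The separate fact that $\bar{H}\bar{v}^{\top}=\bar{s}^{\top}$ iff $f_i(\bar{H},\bar{v},\bar{s})=0$ for all $i$ — the componentwise reading of matrix–vector multiplication — is what makes the encoding correct elsewhere, but it plays no role in this counting claim.)
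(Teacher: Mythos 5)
Your proposal is correct and matches the paper exactly: the paper states this lemma as a "trivial result for completeness" with no separate argument, the construction of $\pcce_{n,m}$ in Section \ref{subsec:paritycheck} being the entire justification, and your unwinding of that construction (one equation per row of $\bar{H}$, each $f_i(\bar{H},v,\bar{s})$ a degree-$\le 1$ polynomial in $v_1,\ldots,v_n$) is precisely what is intended. Nothing is missing.
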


\subsection{Weight-computation encoding}\label{subsec:wcomp}
Let $\bar{v} \in  \mathbb{F}^n$ and $ \ell = \lfloor \mathrm{log}_2(n) \rfloor + 1$, so that the weight of a length-$n$ vector can be written as a length-$\ell$ vector. 
For $i=0,\ldots,n$ and $j=1,\ldots,\ell$ we want to define some functions $a^{(i)}:\mathbb{F}^n\to\mathbb{F}^{\ell}$ and their component functions $a^{(i)}(\bar{v}) = (a^{(i)}_1(\bar{v}),\ldots,a^{(i)}_{\ell}(\bar{v})) \in \mathbb{F}^{\ell}$. 
We set $a^{(0)}(\bar{v}) = (0,\ldots,0)$ for any $\bar{v}$ (for convenience), and we define $a^{(i)}$ recursively for $i=1,\ldots,n$ by computing its coefficients $a^{(i)}_j$ as the polynomials in $\mathbb{F}[v_1,\ldots,v_n]=\mathbb{F}[v]$
\begin{equation}\nonumber%\label{eq:counter}
a^{(i)}_j(v) = a^{(i-1)}_j(v) + \left(\prod_{h=1}^{j-1}a^{(i-1)}_h(v) \right)v_i\;.
\end{equation}
The following lemma and theorem prove that $a^{(i)}(\bar{v})$ is the implementation of a counter that stores the Hamming weight of $\bar{v}$ until the \textit{i-th} coordinate. 

\begin{Lemma}\label{lem:counter} Let $0\leq k\leq n-1$.
If $\bar{v}_{k+1} =1$ then $a^{(k+1)}(\bar{v}) = \increase(a^{(k)}(\bar{v}))$. If $\bar{v}_{k+1} =0$ then $a^{(k+1)}(\bar{v})=a^{(k)}(\bar{v})$.
\end{Lemma}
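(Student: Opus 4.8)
The plan is to verify the two assertions of Lemma~\ref{lem:counter} directly from the recursion defining the maps $a^{(i)}$, treating separately the cases $\bar v_{k+1}=0$ and $\bar v_{k+1}=1$, exactly as they occur in the statement. Throughout I use that the defining relation is a polynomial identity, so it may be evaluated at the point $\bar v$: for $i=k+1$ it reads $a^{(k+1)}_j(\bar v)=a^{(k)}_j(\bar v)+\bigl(\prod_{h=1}^{j-1}a^{(k)}_h(\bar v)\bigr)\bar v_{k+1}$ for $j=1,\ldots,\ell$.

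The case $\bar v_{k+1}=0$ is immediate: the correction term carries the factor $\bar v_{k+1}=0$, so $a^{(k+1)}_j(\bar v)=a^{(k)}_j(\bar v)$ for every $j$, i.e.\ $a^{(k+1)}(\bar v)=a^{(k)}(\bar v)$.

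For $\bar v_{k+1}=1$ the relation becomes $a^{(k+1)}_j(\bar v)=a^{(k)}_j(\bar v)+\prod_{h=1}^{j-1}a^{(k)}_h(\bar v)$. I would set $\bar a:=a^{(k)}(\bar v)$, let $z$ be the index of its left-most zero coordinate, and split on $j$: for $j<z$ all factors $\bar a_1,\ldots,\bar a_{j-1}$ equal $1$ (the product is empty, hence $1$, when $j=1$), so the product equals $1$ and $a^{(k+1)}_j(\bar v)=\bar a_j+1=1+1=0$; for $j=z$ the product is again $1$, so $a^{(k+1)}_z(\bar v)=\bar a_z+1=0+1=1$; for $j>z$ the product contains the zero factor $\bar a_z$ and therefore vanishes, so $a^{(k+1)}_j(\bar v)=\bar a_j$. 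Reading this against the definition of $\increase$ yields $a^{(k+1)}(\bar v)=(0,\ldots,0,1,\bar a_{z+1},\ldots,\bar a_\ell)=\increase(\bar a)$, as required; the corner cases $z=1$ and $z=\ell$ are covered by the same computation.

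The one delicate point, and the step I expect to be the real obstacle, is the tacit hypothesis used above that $\bar a=a^{(k)}(\bar v)\neq\textbf{1}$, so that a left-most zero $z$ exists: if $\bar a=\textbf{1}$ the computation would give $a^{(k+1)}(\bar v)=(0,\ldots,0)$ while $\increase(\textbf{1})=\textbf{1}$, so the identity could genuinely fail there. This is exactly where the choice $\ell=\lfloor\log_2 n\rfloor+1$ is used. To settle it I would prove by induction on $k$, for $0\le k\le n-1$, that $\bintoint(a^{(k)}(\bar v))\le k$: the base case $k=0$ is $a^{(0)}(\bar v)=(0,\ldots,0)$, and the inductive step is precisely the case $\bar a\neq\textbf{1}$ of the lemma already established above --- applicable because $\bintoint(a^{(k)}(\bar v))\le k\le n-1<2^\ell-1$ forces $a^{(k)}(\bar v)\neq\textbf{1}$. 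Since $\ell=\lfloor\log_2 n\rfloor+1$ gives $2^\ell>n$, this bound guarantees $a^{(k)}(\bar v)\neq\textbf{1}$ throughout the relevant range and closes the argument; alternatively, this no-overflow bookkeeping can be folded into the joint induction establishing the companion theorem about $a^{(n)}(\bar v)$.
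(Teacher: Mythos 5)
Your proof is correct and follows essentially the same route as the paper's: evaluate the defining recursion at $\bar v$, note that $\prod_{h=1}^{\Lambda}a^{(k)}_h(\bar v)$ equals $1$ for $\Lambda<z$ and $0$ otherwise (with $z$ the left-most zero position), and split on $j<z$, $j=z$, $j>z$ for the case $\bar v_{k+1}=1$, the case $\bar v_{k+1}=0$ being immediate. Your additional point --- that the argument tacitly requires $a^{(k)}(\bar v)\neq\textbf{1}$ so that $z$ exists, which the paper does not address, and which your induction $\bintoint(a^{(k)}(\bar v))\le k\le n-1<2^{\ell}-1$ supplies without circular appeal to Theorem~\ref{thm:w=int(a)} --- is a genuine refinement that closes a small gap in the published proof.
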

\begin{proof}
Let $a^{(k)}_z(\bar{v})$ be the leftmost 0 in $a^{(k)}(\bar{v})$, $1\le z\le \ell$, and let $\Lambda \in \mathbb{N}$. Observe that
\begin{equation}\nonumber
\prod_{h=1}^{\Lambda}a^{(k)}_h =  \begin{cases}
    1 & \text{if } \Lambda<z\\
    0 & \text{ otherwise}.
  \end{cases}
\end{equation}

We start with the case $\bar{v}_{k+1}=1$ and we compute $a^{(k+1)}_j(\bar{v})$ according to the value of $j$. 
\begin{itemize}
    \item if $1\leq j\leq z-1$ then $a^{(k+1)}_j(\bar{v}) =a^{(k)}_j(\bar{v}) + \left(\prod_{h=1}^{j-1}a^{(k)}_h(\bar{v}) \right)\bar{v}_{k+1} = 1 + (1) \cdot 1=0$;
    \item if $j=z$ we have $a^{(k+1)}_z(\bar{v}) =a^{(k)}_z(\bar{v}) + \left(\prod_{h=1}^{z-1}a^{(k)}_h(\bar{v}) \right)\cdot 1 = 0+(1)\cdot1=1$;
    \item if $j\ge z+1$ then $a^{(k+1)}_j(\bar{v}) =a^{(k)}_j(\bar{v}) + \left(\prod_{h=1}^{j-1}a^{(k)}_h(\bar{v}) \right)\cdot 1 = a^{(k)}_j(\bar{v}) + (0) \cdot 1=a^{(k)}_j(\bar{v})$.
\end{itemize}
The procedure we described flips every 1-bit (until the $(z-1)$-th bit) and the first 0-bit, while leaving all the other bits unchanged, which is the behaviour of the function $\increase()$.

The second possible case is $\bar{v}_{k+1}=0$. Observe that, regardless of $j$, the value $a_j^{(k+1)}(\bar{v})$ is given by
$$a^{(k+1)}_j(\bar{v}) =a^{(k)}_j(\bar{v}) + \left(\prod_{h=1}^{j-1}a^{(k)}_h(\bar{v}) \right)\bar{v}_{k+1}\;,$$
which, since $\bar{v}_{k+1}=0$, is simply
$$a^{(k+1)}_j(\bar{v}) =a^{(k)}_j(\bar{v}) + \left(\prod_{h=1}^{j-1}a^{(k)}_h(\bar{v}) \right)\cdot0=a^{(k)}_j(\bar{v})\;.$$
\end{proof}

\begin{theorem}\label{thm:w=int(a)}
Let $\bar{v} \in \mathbb{F}^n$ and let $1\leq i\leq n$. Then
\begin{equation}\nonumber
\mathrm{w}(\pi_i(\bar{v})) = \bintoint\left(a^{(i)}_1(\bar{v}), \ldots, a^{(i)}_{\ell}(\bar{v})\right)\;.
\end{equation}
\end{theorem}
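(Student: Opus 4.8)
The plan is to argue by induction on $i$, with Lemma~\ref{lem:counter} doing the heavy lifting for the inductive step. It is convenient to include $i=0$ in the induction even though the statement is phrased for $1\le i\le n$: by definition $a^{(0)}(\bar v)=(0,\ldots,0)$ and $\pi_0(\bar v)=(0,\ldots,0)$, so $\mathrm{w}(\pi_0(\bar v))=0=\bintoint\big(a^{(0)}_1(\bar v),\ldots,a^{(0)}_\ell(\bar v)\big)$, which gives the base case.

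For the inductive step, suppose $0\le k\le n-1$ and that $\mathrm{w}(\pi_k(\bar v))=\bintoint\big(a^{(k)}_1(\bar v),\ldots,a^{(k)}_\ell(\bar v)\big)$; I want the same identity with $k$ replaced by $k+1$. I would split into the two cases of Lemma~\ref{lem:counter}. If $\bar v_{k+1}=0$, then $\pi_{k+1}(\bar v)$ differs from $\pi_k(\bar v)$ only in a coordinate that is $0$ in both, so $\mathrm{w}(\pi_{k+1}(\bar v))=\mathrm{w}(\pi_k(\bar v))$; and Lemma~\ref{lem:counter} gives $a^{(k+1)}(\bar v)=a^{(k)}(\bar v)$, so the inductive hypothesis transfers verbatim. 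If $\bar v_{k+1}=1$, then $\mathrm{w}(\pi_{k+1}(\bar v))=\mathrm{w}(\pi_k(\bar v))+1$, while Lemma~\ref{lem:counter} gives $a^{(k+1)}(\bar v)=\increase\big(a^{(k)}(\bar v)\big)$. To conclude I need $\bintoint\big(\increase(a^{(k)}(\bar v))\big)=\bintoint\big(a^{(k)}(\bar v)\big)+1$, which by the definition of $\increase$ holds precisely when $a^{(k)}(\bar v)\neq\mathbf 1$.

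Hence the one genuine point to check — and the only place where the choice $\ell=\lfloor\log_2 n\rfloor+1$ matters — is that the counter never overflows, i.e.\ $a^{(k)}(\bar v)\neq\mathbf 1$ whenever we are about to increment. This follows from the inductive hypothesis together with a size bound: $\bintoint\big(a^{(k)}(\bar v)\big)=\mathrm{w}(\pi_k(\bar v))\le k\le n-1$, whereas $\bintoint(\mathbf 1)=2^\ell-1$, and $2^\ell=2^{\lfloor\log_2 n\rfloor+1}>n$, so $n-1<2^\ell-1$ and therefore $a^{(k)}(\bar v)\neq\mathbf 1$. With this in hand, in the case $\bar v_{k+1}=1$ we get
$$\bintoint\big(a^{(k+1)}(\bar v)\big)=\bintoint\big(a^{(k)}(\bar v)\big)+1=\mathrm{w}(\pi_k(\bar v))+1=\mathrm{w}(\pi_{k+1}(\bar v)),$$
completing the induction and the proof.

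The argument is essentially routine bookkeeping once Lemma~\ref{lem:counter} is available; the only subtlety — the "main obstacle", such as it is — is remembering to verify the non-overflow condition $a^{(k)}(\bar v)\neq\mathbf 1$, since $\increase$ is a no-op on the all-ones vector and the identity $\bintoint(\increase(\bar a))=\bintoint(\bar a)+1$ would otherwise fail. Everything else is forced by the recursive definition of $a^{(i)}$ and the definitions of $\pi_i$ and $\bintoint$.
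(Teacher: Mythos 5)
Your proof is correct and follows essentially the same route as the paper: induction on $i$ (the paper starts the base case at $i=1$ rather than $i=0$, an immaterial difference) with Lemma~\ref{lem:counter} handling both cases of the inductive step. The one place you go beyond the paper is the explicit verification of the non-overflow condition $a^{(k)}(\bar v)\neq\mathbf 1$ via the bound $\bintoint\bigl(a^{(k)}(\bar v)\bigr)=\mathrm{w}(\pi_k(\bar v))\le k\le n-1<2^\ell-1$; the paper's proof applies the identity $\bintoint(\increase(\bar a))=\bintoint(\bar a)+1$ without checking this hypothesis, so your addition is a genuine (if small) improvement in rigor rather than a deviation in method.
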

\begin{proof}
We proceed by induction on $i$. Let $i=1$, then $a^{(1)}_1(\bar{v}) = a^{(0)}_1(\bar{v}) + \bar{v}_1 = \bar{v}_1$, while $a^{(1)}_j(\bar{v}) = 0+0\cdot0=0$ for every $j=2,\ldots,\ell$. Therefore $\bintoint(a^{(1)}(\bar{v}))=\bintoint(\bar{v}_1,0,\ldots,0) = \mathrm{w}(\pi_1(\bar{v}))$.
\\
Assume that $\bintoint(a^{(i)}(\bar{v})) = \mathrm{w}(\pi_i(\bar{v}))$ holds for a certain value of $1\leq i\leq n-1$, then we are going to prove $\bintoint(a^{(i+1)}(\bar{v})) = \mathrm{w}(\pi_{i+1}(\bar{v}))$. We have two cases: either $\bar{v}_{i+1} = 0$ or $\bar{v}_{i+1} = 1$.
\begin{enumerate}
\item If $\bar{v}_{i+1} = 0$ then by Lemma \ref{lem:counter} $a^{(i+1)}(\bar{v}) = a^{(i)}(\bar{v})$, in accordance with $\pi_{i+1}(\bar{v})= \pi_i(\bar{v})$.

\item If $\bar{v}_{i+1} = 1$ then, by Lemma \ref{lem:counter}, we have  $a^{(i+1)}(\bar{v}) = \increase(a^{(i)}(\bar{v}))$ and therefore $\bintoint(a^{(i+1)}(\bar{v})) = \bintoint(\increase(a^{(i)}(\bar{v}))) = \bintoint(a^{(i)}(\bar{v})) + 1 = \mathrm{w}(\pi_{i}(\bar{v})) +1 = \mathrm{w}(\pi_{i+1}(\bar{v}))$, where the last equality comes from obvious properties of weights.
\end{enumerate}
\end{proof}

The following corollary states that the Hamming weight of a vector $\bar{v} \in \mathbb{F}^n$ is exactly the integer represented by the vector $\left(a^{(n)}_1(\bar{v}), \ldots, a^{(n)}_{\ell}(\bar{v})\right)$.

\begin{Cor}\label{cor: weight bar a}
Let $\bar{v} \in \mathbb{F}^n$ then
\begin{equation}\label{cor:weightmap}
\mathrm{w}(\bar{v}) = \bintoint\left(a^{(n)}_1(\bar{v}), \ldots, a^{(n)}_{\ell}(\bar{v})\right)
\end{equation}
\end{Cor}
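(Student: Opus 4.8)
The plan is to obtain Corollary~\ref{cor: weight bar a} as an immediate specialization of Theorem~\ref{thm:w=int(a)}. Applying that theorem with $i = n$ already gives, for every $\bar{v} \in \mathbb{F}^n$,
\[
\mathrm{w}(\pi_n(\bar{v})) = \bintoint\!\left(a^{(n)}_1(\bar{v}), \ldots, a^{(n)}_{\ell}(\bar{v})\right).
\]
So the only thing left to check is that $\pi_n$ is the identity on $\mathbb{F}^n$: by definition $\pi_n(\bar{v}_1,\ldots,\bar{v}_n) = (\bar{v}_1,\ldots,\bar{v}_n,0,\ldots,0)$, where the block of trailing zeros has length $n - n = 0$ and is therefore empty, whence $\pi_n(\bar{v}) = \bar{v}$ and $\mathrm{w}(\pi_n(\bar{v})) = \mathrm{w}(\bar{v})$. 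Substituting this into the displayed equality yields exactly \eqref{cor:weightmap}.

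The single point worth a sentence of justification is that the right-hand side of \eqref{cor:weightmap} recovers $\mathrm{w}(\bar{v})$ as an honest integer rather than $\mathrm{w}(\bar{v}) \bmod 2^{\ell}$. This is built into the choice $\ell = \lfloor \log_2 n \rfloor + 1$: every integer in $\{0,1,\ldots,n\}$ — in particular every Hamming weight attainable by a vector of $\mathbb{F}^n$ — has an $\ell$-bit binary expansion, so the counter defined in Section~\ref{subsec:wcomp} never overflows as $i$ ranges over $1,\ldots,n$, and the recursive behaviour established in Lemma~\ref{lem:counter} (each $a^{(i)}$ equals either $a^{(i-1)}$ or $\increase(a^{(i-1)})$) stays faithful throughout.

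In short, there is essentially no obstacle here: all the substantive work has been carried out in Lemma~\ref{lem:counter} and Theorem~\ref{thm:w=int(a)}, and the corollary is a one-line consequence of setting $i = n$ and noting $\pi_n = \mathrm{id}_{\mathbb{F}^n}$. The ``hard part'' is merely the bookkeeping remark that the bit-length $\ell$ is large enough to hold the maximal possible weight $n$, which is immediate from the definition of $\ell$.
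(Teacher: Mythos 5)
Your proof is correct and follows exactly the paper's own argument: specialize Theorem~\ref{thm:w=int(a)} to $i=n$ and observe that $\pi_n$ is the identity on $\mathbb{F}^n$. The additional remark about $\ell=\lfloor\log_2 n\rfloor+1$ being large enough to avoid overflow is a reasonable sanity check but not needed beyond what the cited results already establish.
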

\begin{proof}
It follows from Theorem \ref{thm:w=int(a)} by noticing that $\pi_n(\bar{v}) = (\bar{v}_1,\ldots,\bar{v}_n) = \bar{v}$.
\end{proof}
Observe that, for $1\leq j\leq \ell$ and for any $1\leq i\leq n$, Formula \eqref{cor:weightmap} is an equation of degree $j$ of the form $x+y+M=0$ where $M$ is a degree-$j$ monomial. By using the procedure described in Remark \ref{rem:quadratize} each equation can be reduced to a system of less than $\ell$ quadratic equations by adding less than $\ell$ new variables. Indeed, $\quadr(x+y+M=0)$ is a quadratic system with $j-1<\ell$ equations in $j-2< \ell$ variables.
\\
By performing this procedure to each of the $n\cdot \ell$ equations we obtain a system of quadratic equations. We remark that many variables were shared between equations, so there are several possible optimizations to be applied instead of applying $\quadr()$ to each Equation \eqref{cor:weightmap}. However, since the degree of each equation \eqref{cor:weightmap} is bounded by $\ell$, even without optimizing the procedure, we end up with a system with a manageable number of quadratic equations, as stated by the following lemma.
\begin{Lemma}\label{lem: complexity hwce}
The number of quadratic equations in
$$
\quadr\left(
\left\lbrace a^{(i)}_j = a^{(i-1)}_j + \left(\prod_{h=1}^{j-1}a^{(i-1)}_h \right)v_i \right\rbrace_{
i = 1,\ldots,n \textit{, }
j=1,\ldots,\ell}
\right)
$$
is $\mathcal{O}(n\ell^2$).
The total number of variables is $\mathcal{O}(n\ell^2)$.
\end{Lemma}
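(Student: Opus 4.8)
The plan is to exploit that $\quadr$ acts equation by equation, so that the count is additive over the $n\ell$ equations of the system, and then to bound the contribution of each single equation. First I would fix a pair $(i,j)$ with $1\le i\le n$ and $1\le j\le \ell$ and examine the corresponding equation
\begin{equation}\nonumber
E_{i,j}:\qquad a^{(i)}_j + a^{(i-1)}_j + \left(\prod_{h=1}^{j-1}a^{(i-1)}_h\right)v_i = 0\;.
\end{equation}
It has exactly one monomial of degree larger than $2$, namely $\left(\prod_{h=1}^{j-1}a^{(i-1)}_h\right)v_i$, which is a product of the $j-1$ distinct symbols $a^{(i-1)}_1,\dots,a^{(i-1)}_{j-1}$ and the symbol $v_i$, hence of degree exactly $j\le\ell$; the two remaining monomials $a^{(i)}_j$ and $a^{(i-1)}_j$ are linear. (For $j\le 2$ the equation is already of degree at most $2$, so $\quadr$ leaves it untouched.)

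Next I would apply Fact \ref{rem:quadratize} to that unique high-degree monomial. Since its degree is $j$, Fact \ref{rem:quadratize} substitutes it by a chain of $j-2$ fresh variables and rewrites $E_{i,j}$ accordingly, so that $\quadr(E_{i,j})$ consists of at most $j-1\le\ell-1$ quadratic equations and introduces at most $j-2\le\ell-2$ new variables (and none when $j\le 2$). Because $\quadr$ of the whole system is by definition the union $\bigcup_{i,j}\quadr(E_{i,j})$, summing these bounds over the $n\ell$ pairs $(i,j)$ gives at most $n\ell(\ell-1)$ quadratic equations and at most $n\ell(\ell-2)$ newly introduced variables, both of which are $\mathcal{O}(n\ell^2)$.

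Finally I would account for every variable of the resulting quadratic system: the auxiliary variables just counted, which number $\mathcal{O}(n\ell^2)$, together with the $n\ell$ symbols $a^{(i)}_j$ and the $n$ symbols $v_1,\dots,v_n$, for a total of $\mathcal{O}(n\ell^2)+\mathcal{O}(n\ell)+\mathcal{O}(n)=\mathcal{O}(n\ell^2)$. I do not expect a genuine obstacle: the argument is pure bookkeeping, and the only points that require care are reading off the degree $j$ of the single high-degree monomial of $E_{i,j}$ and remembering to include the variables $a^{(i)}_j$ and $v_i$ alongside the auxiliary ones in the final tally. (Constants could be improved by exploiting that the equations share many variables, as the surrounding discussion notes, but this is unnecessary for the stated $\mathcal{O}$-bounds.)
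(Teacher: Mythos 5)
Your proposal is correct and follows the same route as the paper: the paper's proof simply observes that each of the $n\ell$ equations is turned into fewer than $\ell$ quadratic equations by adding fewer than $\ell$ variables, which is exactly your per-equation bound via Fact \ref{rem:quadratize} summed over all pairs $(i,j)$. Your version just makes the degree count of the single high-degree monomial and the tally of the original variables $a^{(i)}_j$ and $v_i$ explicit, which the paper leaves implicit.
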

\begin{proof}
Each of the $n\ell$ equations is transformed into a set of less than $\ell$ quadratic equations by adding less than $\ell$ variables. 
\end{proof}

We are finally ready to construct the set of equations corresponding to the (Hamming) weight-computation encoding, which contains the polynomials corresponding to the weight-computation procedure depicted in this section.
We define
\begin{equation}\nonumber
\hwce_{n,m} = \quadr\left(
\left\lbrace a^{(i)}_j = a^{(i-1)}_j + \left(\prod_{h=1}^{j-1}a^{(i-1)}_h \right)v_i \right\rbrace_{
i = 1,\ldots,n \textit{, }
j=1,\ldots,\ell}
\right)\;,
\end{equation}
where the variables $a_j^{(i)}$ obviously play the role of the previously defined functions.
From Theorem \ref{thm:w=int(a)} and Corollary \ref{cor: weight bar a}, it follows that by evaluating $\hwce_{n,m}$ at $\bar{v}$ we obtain the vector $(a_1^{(n)}(\bar{v}),\ldots,a_{\ell}^{(n)}(\bar{v}))$ containing the binary expansion of $\mathrm{w}(\bar{v})$.

\subsection{Weight constraint encoding}\label{subsec:wcheck}
The idea for the construction of this encoding is the definition of a quadratic Boolean polynomial capable of comparing two Boolean vectors according to the values of these two vectors when seen as integers. More precisely, in this section we construct a polynomial whose evaluation at a pair of Boolean vectors $\bar{u}$ and $\bar{v}$ is $0$ if and only if $\bintoint(\bar{u})\leq \bintoint(\bar{v})$. For the sake of completeness we define the claimed polynomial for a more general situation, and in the end we specify the parameters useful in the context of our reduction from MLD to MQ.

Consider two binary vectors $\bar{u}, \bar{v} \in \mathbb{F}^l$, where the most significant bits are $\bar{u}_l$ and $\bar{v}_l$, respectively. To compare the integers associated to $\bar{u}$ and $\bar{v}$ we can follow the following procedure, which outputs $0$ when $\bintoint(\bar{u}) \leq \bintoint(\bar{v})$ and $1$ otherwise, starting from $j=l$:
\begin{itemize}
    \item if $\bar{u}_j=\bar{v}_j$, then we move to the next bits $\bar{u}_{j-1}$ and $\bar{v}_{j-1}$;
    \item if $\bar{u}_j\neq \bar{v}_j$, we output $\bar{u}_j$, since $\bintoint(\bar{u}) \leq \bintoint(\bar{v})$ if $\bar{u}_j=0$;
    \item if we reach $j=1$ and $\bar{u}_1=\bar{v}_1$, then we output $0$.
\end{itemize}
We make use of this procedure to define a polynomial $F \in \mathbb{F}[u_1,\ldots,u_l,v_1,\ldots,v_l]$ such that
\begin{equation}\label{eq: function F}
F(\bar{u},\bar{v}) = \begin{cases}
0 	\quad\mathrm{ if }\; \bintoint(\bar{u}) \leq \bintoint(\bar{v})\\
1 \quad\mathrm{ if }\; \bintoint(\bar{u}) > \bintoint(\bar{v})\;.
\end{cases}
\end{equation}

Define $g_h(u,v) = (u_h + v_h) \in \mathbb{F}[u_h,v_h]$ for every $h=1,\ldots,l$ and notice that $g_h(\bar{u},\bar{v})=0$ if and only if $\bar{u}_h=\bar{v}_h$. Moreover, for $j=1,\ldots,l$, define the polynomials
\begin{equation}\label{eq:locators}
f_j = g_j\prod_{h=j+1}^l(g_h + 1)\;,
\end{equation}
where $f_j \in  \mathbb{F}[u_1,\ldots,u_l,v_1,\ldots,v_l]$. Clearly in the special case $j=l$ we have $f_l=g_l$. Observe that the degree of $f_j$ is $l-j+1$.

Given two vectors $\bar{u}, \bar{v} \in \mathbb{F}^l$, the purpose of the set of polynomials $\lbrace f_j \rbrace_{j=1}^l$ in \eqref{eq:locators} is to locate the most significant bit in which $\bar{u}$ and $\bar{v}$ differ. We prove this in the following lemmas.

\begin{Lemma}\label{lem:atmostlocation}
Let $\bar{u}, \bar{v} \in \mathbb{F}^l$, then $f_j(\bar{u}, \bar{v}) = 1$ for at most one value of $j$.
\end{Lemma}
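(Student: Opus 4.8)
The plan is to show that the polynomials $f_j$ act as indicators that "fire" exactly at the most significant disagreement position of $\bar u$ and $\bar v$, and in particular no two of them can fire simultaneously. Recall $f_j = g_j\prod_{h=j+1}^{l}(g_h+1)$, where $g_h(\bar u,\bar v)=\bar u_h+\bar v_h$ is $0$ iff $\bar u_h=\bar v_h$ and $1$ otherwise. The key observation is that $f_j(\bar u,\bar v)=1$ forces both a condition on position $j$ and a condition on all positions above $j$: since we are over $\mathbb{F}_2$, a product equals $1$ iff every factor equals $1$. Hence $f_j(\bar u,\bar v)=1$ if and only if $g_j(\bar u,\bar v)=1$ (i.e. $\bar u_j\neq \bar v_j$) and $g_h(\bar u,\bar v)+1=1$ for all $h>j$ (i.e. $\bar u_h=\bar v_h$ for all $h>j$). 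In words, $f_j$ fires iff $j$ is precisely the largest index at which $\bar u$ and $\bar v$ differ.

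First I would make the characterization above explicit: state and justify that $f_j(\bar u,\bar v)=1 \iff \bar u_j\neq\bar v_j \text{ and } \bar u_h=\bar v_h \ \forall h>j$, handling the boundary case $j=l$ (where the product is empty, so $f_l=g_l$ and the condition is just $\bar u_l\neq\bar v_l$) separately or by the usual empty-product convention. Then I would argue uniqueness by contradiction: suppose $f_{j_1}(\bar u,\bar v)=f_{j_2}(\bar u,\bar v)=1$ with $j_1<j_2$. From $f_{j_2}=1$ we get $\bar u_{j_2}\neq\bar v_{j_2}$. But from $f_{j_1}=1$ we get $\bar u_h=\bar v_h$ for every $h>j_1$, and in particular for $h=j_2$ since $j_2>j_1$, so $\bar u_{j_2}=\bar v_{j_2}$ — a contradiction. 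Therefore at most one $j$ can satisfy $f_j(\bar u,\bar v)=1$.

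Alternatively, and perhaps more cleanly, one can avoid the explicit characterization and argue directly on the product structure: if $f_{j_1}(\bar u,\bar v)=1$ then every factor of $f_{j_1}$ is $1$, so in particular $g_h(\bar u,\bar v)+1=1$ — hence $g_h(\bar u,\bar v)=0$ — for all $h\ge j_1+1$; now for any $j_2>j_1$ the polynomial $f_{j_2}$ has $g_{j_2}$ as one of its factors, and $g_{j_2}(\bar u,\bar v)=0$, so $f_{j_2}(\bar u,\bar v)=0$. This shows that once $f_{j_1}$ fires, all $f_{j_2}$ with $j_2>j_1$ vanish, which immediately gives the claim.

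I do not expect a serious obstacle here; the only thing to be careful about is the bookkeeping of the empty product when $j=l$ and the direction of indexing (most significant bit is the $l$-th coordinate), so that "above $j$" means larger index. The argument is entirely a statement about evaluations in $\mathbb{F}_2$, using only that a $\mathbb{F}_2$-product is $1$ iff all factors are $1$.
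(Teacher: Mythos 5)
Your proposal is correct and, in its second ("alternatively") form especially, is essentially the paper's own argument: both rest on the observation that over $\mathbb{F}_2$ a product is $1$ only if every factor is $1$, so one of the two firing polynomials forces a factor of the other to vanish. The only cosmetic difference is the direction of the contradiction (the paper starts from the larger index $j$ and notes $(g_j+1)\mid f_i$ kills $f_i$, whereas you start from the smaller index and kill $f_{j_2}$ via its factor $g_{j_2}$); the content is the same.
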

\begin{proof}
Assume by contradiction that $f_j(\bar{u}, \bar{v}) = f_i(\bar{u}, \bar{v}) = 1$ for $i \ne j$. We can assume, without loss of generality, that $i < j$. By construction of $f_j$ we must have that $g_j(\bar{u}, \bar{v}) = 1$. But $(g_j + 1) \mid f_i$ since $i < j$, $(g_j+1)(\bar{u},\bar{v})=0$, and therefore $f_i(\bar{u},\bar{v})=0$, which is a contradiction.

\end{proof}

\begin{Lemma}\label{lem:equality}
Let $\bar{u}, \bar{v} \in \mathbb{F}^l$, then $f_j(\bar{u}, \bar{v}) = 0 \textit{ } \forall \textit{ } j=1,\ldots,l$ if and only if  $\bar{u} = \bar{v}$.
\end{Lemma}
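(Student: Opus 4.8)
The statement is an ``if and only if'', so I would prove the two implications separately, both by direct computation on the defining formula \eqref{eq:locators}, namely $f_j = g_j\prod_{h=j+1}^l(g_h+1)$ with $g_h = u_h + v_h$.

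For the easy direction, assume $\bar{u} = \bar{v}$. Then $g_h(\bar{u},\bar{v}) = \bar{u}_h + \bar{v}_h = 0$ for every $h = 1,\ldots,l$. In particular the leading factor $g_j(\bar{u},\bar{v})$ of $f_j$ vanishes for each $j$, so $f_j(\bar{u},\bar{v}) = 0$ for all $j$, as wanted. This takes one line.

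For the converse I would argue by contrapositive: assume $\bar{u} \ne \bar{v}$ and exhibit an index $j$ with $f_j(\bar{u},\bar{v}) = 1$. Since $\bar{u}$ and $\bar{v}$ differ in at least one coordinate, let $j^\ast$ be the largest such index, i.e. the most significant bit in which they disagree. Then $g_{j^\ast}(\bar{u},\bar{v}) = \bar{u}_{j^\ast} + \bar{v}_{j^\ast} = 1$, while for every $h > j^\ast$ we have $\bar{u}_h = \bar{v}_h$, hence $g_h(\bar{u},\bar{v}) = 0$ and $(g_h+1)(\bar{u},\bar{v}) = 1$. Plugging into \eqref{eq:locators} gives
$$
f_{j^\ast}(\bar{u},\bar{v}) = g_{j^\ast}(\bar{u},\bar{v})\prod_{h=j^\ast+1}^l\bigl(g_h(\bar{u},\bar{v})+1\bigr) = 1\cdot\prod_{h=j^\ast+1}^l 1 = 1 \neq 0,
$$
so the hypothesis ``$f_j(\bar{u},\bar{v}) = 0$ for all $j$'' fails. (When $j^\ast = l$ the product is empty and $f_l = g_l$, consistent with the remark after \eqref{eq:locators}.)

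There is essentially no obstacle here: the only thing to be careful about is choosing the \emph{most significant} differing index $j^\ast$ rather than an arbitrary one, so that all the factors $(g_h+1)$ with $h > j^\ast$ evaluate to $1$; an arbitrary differing index need not make $f_j$ equal to $1$. This lemma, together with Lemma \ref{lem:atmostlocation}, is exactly what is needed to later identify $F$ with $\sum_j u_j f_j$ (or the analogous expression) realizing \eqref{eq: function F}.
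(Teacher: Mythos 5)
Your proof is correct. The easy direction ($\bar{u}=\bar{v}\Rightarrow f_j(\bar{u},\bar{v})=0$ for all $j$) is argued exactly as in the paper, via $g_j\mid f_j$. For the harder direction the paper argues directly rather than by contraposition: it shows $f_l(\bar{u},\bar{v})=0$ forces $\bar{u}_l=\bar{v}_l$, and then proves by downward iteration from $k=l-1$ to $k=1$ that, once $\bar{u}_h=\bar{v}_h$ for all $h>k$, the factors $(g_h+1)$ with $h>k$ all evaluate to $1$ and hence $0=f_k(\bar{u},\bar{v})=g_k(\bar{u},\bar{v})$ forces $\bar{u}_k=\bar{v}_k$. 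Your contrapositive --- pick the most significant differing index $j^\ast$ and evaluate $f_{j^\ast}(\bar{u},\bar{v})=1$ --- is the logically dual formulation of the same computation, and it is slightly more economical since it avoids the induction and goes straight to a single witness index. It is worth noting that your argument is essentially the first half of the paper's Lemma \ref{lem:locationunique} (existence of a $j$ with $f_j(\bar{u},\bar{v})=1$ when $\bar{u}\neq\bar{v}$), which the paper proves separately right after this lemma; so your route would let Lemma \ref{lem:locationunique}'s existence claim be quoted rather than reproved, whereas the paper's direct induction keeps the two statements independent. Your closing caution about choosing the \emph{most significant} differing index is exactly the right point, and your remark that $F$ is built from the $f_j$'s matches the paper (there $F=\sum_j f_j\cdot(v_j+1)$).
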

\begin{proof} 
We show the first implication.
Since $f_l(\bar{u},\bar{v}) = g_l(\bar{u},\bar{v})=0$, then $\bar{u}_l=\bar{v}_l$. 
\\
We claim that, for any $1\leq k\leq l-1$, $\bar{u}_{k}$ is equal to $\bar{v}_{k}$, provided that $\bar{u}_h = \bar{v}_h$ for any $k+1 \leq h \leq l$.
Notice that $(g_{h}+1) \mid f_{k}$ for every value $h = k+1,\ldots,l$. Since $\bar{u}_{h} = \bar{v}_{h}$ then $(g_{h} + 1)(\bar{u}, \bar{v}) = 1$, for $h = k+1,\ldots,l$. This implies $0 = f_{k}(\bar{u}, \bar{v}) =  g_{k}(\bar{u}, \bar{v})\cdot1=\bar{u}_k+\bar{v}_k$ and therefore $\bar{u}_{k} = \bar{v}_{k}$. 
\\
The first implication follows by an iterated application of our claim from $k=l-1$ until $k=1$.

As regards the second implication, $\bar{u} = \bar{v}$ implies $\bar{u}_j = \bar{v}_j$ for every $j = 1,\ldots,l$, and so $g_j(\bar{u}, \bar{v}) = 0$. Observe also that, by construction in (\ref{eq:locators}), $g_j \mid f_j$ for every $j = 1,\ldots,l$, which forces $f_j(\bar{u}, \bar{v})=0$.
\end{proof}

\begin{Lemma}\label{lem:locationunique}
Let $\bar{u}, \bar{v} \in \mathbb{F}^l$ with $\bar{u} \neq \bar{v}$, then there exists a unique $j$ such that $f_j(\bar{u},\bar{v}) = 1$. Moreover $\bar{u}_j\neq\bar{v}_j$ while $\bar{u}_k = \bar{v}_k$ for every $k = j+1, \ldots,l$.
\end{Lemma}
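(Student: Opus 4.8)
The plan is to identify $j$ as the position of the most significant bit in which $\bar{u}$ and $\bar{v}$ disagree, and then read off all three assertions almost for free. First I would set $j := \max\{\, k \in \{1,\ldots,l\} : \bar{u}_k \neq \bar{v}_k \,\}$, which is well defined precisely because $\bar{u} \neq \bar{v}$. By the definition of $g_j$ we have $g_j(\bar{u},\bar{v}) = \bar{u}_j + \bar{v}_j = 1$, while for every $h$ with $j+1 \leq h \leq l$ the maximality of $j$ gives $\bar{u}_h = \bar{v}_h$, hence $g_h(\bar{u},\bar{v}) = 0$ and $(g_h+1)(\bar{u},\bar{v}) = 1$. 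Substituting into \eqref{eq:locators} yields $f_j(\bar{u},\bar{v}) = g_j(\bar{u},\bar{v})\prod_{h=j+1}^{l}(g_h+1)(\bar{u},\bar{v}) = 1$, where the product is empty (and equal to $1$) when $j = l$, consistently with the already-noted identity $f_l = g_l$.

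Next I would invoke Lemma \ref{lem:atmostlocation}, which states that $f_{j'}(\bar{u},\bar{v}) = 1$ for at most one index $j'$; since we have just exhibited one such index, it is the unique one, settling the existence-and-uniqueness part of the statement. (Alternatively, existence can be obtained from Lemma \ref{lem:equality}: because $\bar{u} \neq \bar{v}$, not all $f_{j'}$ vanish, so some $f_{j'}$ equals $1$.) The \emph{moreover} clause is then immediate from the choice of $j$: $\bar{u}_j \neq \bar{v}_j$ holds by the definition of $j$, and $\bar{u}_k = \bar{v}_k$ for all $k = j+1,\ldots,l$ is exactly the maximality of $j$ used above. Hence no extra argument is needed once $j$ has been pinned down.

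I do not expect any real obstacle here: the lemma is essentially a bookkeeping consequence of Lemmas \ref{lem:atmostlocation} and \ref{lem:equality} together with the explicit product form \eqref{eq:locators}. The only point deserving a moment's care is the boundary case $j = l$, where the product defining $f_j$ ranges over the empty index set; this is handled by the convention, already in force in the text, that $f_l = g_l$.
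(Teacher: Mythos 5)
Your proof is correct and follows essentially the same route as the paper's: choose $j$ as the most significant position where $\bar{u}$ and $\bar{v}$ differ, verify directly from \eqref{eq:locators} that $f_j(\bar{u},\bar{v})=1$, and obtain uniqueness from Lemma \ref{lem:atmostlocation}. Your handling of the \emph{moreover} clause is slightly cleaner (it is immediate from the maximality of $j$, whereas the paper re-derives it by a small contradiction argument), but the substance is identical.
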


\begin{proof}
If $\bar{u} \neq \bar{v}$ then there exists $j$ such that $\bar{u}_j \neq \bar{v}_j$. Let $j$ be such that $\bar{u}_j \neq \bar{v}_j$ and $\bar{u}_k = \bar{v}_k$ for every $k>j$. This implies $(g_k +1)(\bar{u} , \bar{v}) = 1$ for each $k>j$, as well as $g_j(\bar{u} , \bar{v})=1$, thus $f_j(\bar{u} , \bar{v}) = 1$. The uniqueness of $j$ follows from Lemma \ref{lem:atmostlocation}.

For the second part of the statement, if there exists $k > j$ such that $\bar{u}_k \neq \bar{v}_k$, then $g_k(\bar{u}_k, \bar{v}_k) = 1$. However, $(g_k +1)\mid f_j$ and $(g_k + 1)(\bar{u}, \bar{v}) = 0$ imply  $f_j(\bar{u}, \bar{v}) = 0$, which is a contradiction.
\end{proof}

We are ready to define our function $F$ as in Equation \eqref{eq: function F}.
\begin{Prop}\label{prop:weightconst}
Let $F = \sum_{j=1}^l f_j\cdot (v_j+1) \in \mathbb{F}[u_1,\ldots,u_l,v_1,\ldots,v_l]$ and let $\bar{u}, \bar{v} \in \mathbb{F}^l$. Then $F(\bar{u},\bar{v}) = 0$ if and only if $\bintoint(\bar{u}) \le \bintoint(\bar{v})$.
\end{Prop}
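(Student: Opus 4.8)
The plan is to analyze the value $F(\bar u,\bar v) = \sum_{j=1}^l f_j(\bar u,\bar v)\cdot(\bar v_j+1)$ by splitting into the two cases $\bar u = \bar v$ and $\bar u \neq \bar v$, using the locator lemmas already established. In the case $\bar u = \bar v$, Lemma \ref{lem:equality} gives $f_j(\bar u,\bar v) = 0$ for all $j$, so every summand vanishes and $F(\bar u,\bar v) = 0$; this is consistent with the conclusion since $\bintoint(\bar u) = \bintoint(\bar v)$ certainly satisfies $\bintoint(\bar u) \le \bintoint(\bar v)$. So the content is entirely in the case $\bar u \neq \bar v$.

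When $\bar u \neq \bar v$, Lemma \ref{lem:locationunique} supplies a unique index $j_0$ with $f_{j_0}(\bar u,\bar v) = 1$, and moreover $\bar u_{j_0} \neq \bar v_{j_0}$ while $\bar u_k = \bar v_k$ for all $k > j_0$. Since $f_j(\bar u,\bar v) = 0$ for all $j \neq j_0$ (by uniqueness, or by Lemma \ref{lem:atmostlocation}), the sum collapses to a single term: $F(\bar u,\bar v) = f_{j_0}(\bar u,\bar v)\cdot(\bar v_{j_0}+1) = \bar v_{j_0} + 1$. Hence $F(\bar u,\bar v) = 0$ exactly when $\bar v_{j_0} = 1$, and $F(\bar u,\bar v) = 1$ exactly when $\bar v_{j_0} = 0$. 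I would then observe that, since $\bar u_{j_0} \neq \bar v_{j_0}$, the condition $\bar v_{j_0} = 1$ is equivalent to $\bar u_{j_0} = 0$.

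It remains to argue that, given $\bar u_k = \bar v_k$ for all $k > j_0$ and $\bar u_{j_0} \neq \bar v_{j_0}$, we have $\bintoint(\bar u) \le \bintoint(\bar v)$ if and only if $\bar u_{j_0} = 0$ (equivalently $\bar v_{j_0} = 1$). This is the standard comparison of binary integers by their most significant differing bit: write $\bintoint(\bar v) - \bintoint(\bar u) = \sum_{i=1}^{l}(\bar v_i - \bar u_i)2^{i-1}$, note the terms with $i > j_0$ vanish and the terms with $i < j_0$ contribute at most $2^{j_0-1}-1$ in absolute value, so the sign of the difference is the sign of $(\bar v_{j_0}-\bar u_{j_0})2^{j_0-1}$, which is positive iff $\bar v_{j_0} = 1$. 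Combining this with the computation $F(\bar u,\bar v) = \bar v_{j_0}+1$ yields: $F(\bar u,\bar v) = 0 \iff \bar v_{j_0} = 1 \iff \bintoint(\bar u) < \bintoint(\bar v)$, and together with the equality case this gives $F(\bar u,\bar v) = 0 \iff \bintoint(\bar u) \le \bintoint(\bar v)$, as claimed.

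The only mild obstacle is the elementary number-theoretic estimate $\sum_{i<j_0} 2^{i-1} = 2^{j_0-1}-1 < 2^{j_0-1}$ guaranteeing that lower-order bits cannot reverse the comparison determined by bit $j_0$; everything else is a direct substitution into the already-proved locator lemmas. I would keep that estimate to one line rather than belabor it.
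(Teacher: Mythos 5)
Your proof is correct and follows essentially the same route as the paper's: split on whether all $f_j(\bar u,\bar v)$ vanish (equivalently $\bar u=\bar v$, by Lemma \ref{lem:equality}) or a unique $f_{j_0}$ equals $1$ (Lemma \ref{lem:locationunique}), collapse the sum to $\bar v_{j_0}+1$, and conclude from the most significant differing bit. The only difference is cosmetic: you make explicit the estimate $\sum_{i<j_0}2^{i-1}=2^{j_0-1}-1<2^{j_0-1}$ justifying that bit $j_0$ decides the comparison, which the paper treats as immediate.
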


\begin{proof}
We have two cases: either $f_j(\bar{u}, \bar{v}) = 0$ for every value $j = 1,\ldots,l$,
or there exists a (by Lemma \ref{lem:locationunique}) \textit{unique} $k$ such that $f_k(\bar{u},\bar{v}) = 1$.
\\
The first case, due to Lemma \ref{lem:equality}, is equivalent to $\bar{u}= \bar{v}$ and thus it corresponds to the equality $\bintoint(\bar{u})=\bintoint(\bar{v})$. By definition of $F$, the first case also implies $F(\bar{u},\bar{v})=0$.\\ We have proved that $\bintoint(\bar{u})=\bintoint(\bar{v})$ implies $F(\bar{u},\bar{v})=0$, while $F(\bar{u},\bar{v})=0$ implies either that $\bintoint(\bar{u})=\bintoint(\bar{v})$ or that we are not in the first case.

In the second case, let $f_k(\bar{u}, \bar{v}) = 1$ for a certain value $k$. Lemma \ref{lem:equality} implies that $\bintoint(\bar{u})\neq\bintoint(\bar{v})$, and by Lemma \ref{lem:locationunique} we have $\bar{u}_j=\bar{v}_j$ for $k+1\leq j\leq l$ and $\bar{u}_k\neq \bar{v}_k$. \\
If $\bar{u}_k=0$ and $\bar{v}_k=1$, then $\bintoint(\bar{u})<\bintoint(\bar{v})$ and
$$
F(\bar{u},\bar{v}) = \sum_{j\neq k}f_j(\bar{u},\bar{v})\cdot (\bar{v}_j +1)+f_k(\bar{u},\bar{v})\cdot(\bar{v}_k+1) =
\sum_{j\neq k}0\cdot (\bar{v}_j +1)+1\cdot(\bar{v}_k+1)=\bar{v}_k+1=0\;.
$$
If instead $\bar{u}_k=1$ and $\bar{v}_k=0$, then $\bintoint(\bar{u})>\bintoint(\bar{v})$ and
$$
F(\bar{u},\bar{v}) = \sum_{j\neq k}f_j(\bar{u},\bar{v})\cdot (\bar{v}_j +1)+f_k(\bar{u},\bar{v})\cdot(\bar{v}_k+1) =
\sum_{j\neq k}0\cdot (\bar{v}_j +1)+1\cdot(\bar{v}_k+1)=\bar{v}_k+1=1\;.
$$
Either way, if $\bintoint(\bar{u})\neq\bintoint(\bar{v})$ then $F(\bar{u},\bar{v})=0$ if and only if $\bintoint(\bar{u}) < \bintoint(\bar{v})$.

\end{proof}
Observe that the degree of $F$ is equal to $\deg(f_1)+1=l+1$. Observe also that the degree of $f$ evaluated at $\bar{v}_1,\ldots,\bar{v}_l$, which is a polynomial in $\mathbb{F}[u_1,\ldots,u_l]$, is at most $l$.
\\

Let $I_{\mathrm{MLD}}=(\bar{H},\bar{s},\bar{t})$, let $\bar{v}$ be a vector for which $\bar{H}\bar{v}^{\top}=\bar{s}^{\top}$ and let $\bar{t}=(\bar{t}_1,\ldots,\bar{t}_{\ell})$. 
We recall that in Section \ref{subsec:wcomp} we defined a set of $\ell$ Boolean functions $a_1^{(n)},\ldots,a_{\ell}^{(n)}$ representing the weight of a length-$n$ vector $v$, i.e. by Corollary \ref{cor: weight bar a} $\bintoint(a_1^{(n)}(\bar{v}),\ldots,a_{\ell}^{(n)}(\bar{v}))=\mathrm{w}(\bar{v})$.

Then, by Corollary \ref{cor: weight bar a} and Proposition \ref{prop:weightconst},
$$
F(a_1^{(n)}(\bar{v}),\ldots,a_{\ell}^{(n)}(\bar{v}),\bar{t}_1,\ldots,\bar{t}_{\ell})=0
$$
if and only if $$
\mathrm{w}(\bar{v})\leq \bintoint(\bar{t})\;.
$$
Let us consider the following polynomial
$$
F_{\ell}\in\mathbb{F}[a_1^{(n)},\ldots,a_{\ell}^{(n)},t_1,\ldots,t_{\ell}]\;,
$$
which is obtained by rewriting $F$ in the variables $a_1^{(n)},\ldots,a_{\ell}^{(n)}$ and in (the new variables of) $t_1,\ldots,t_{\ell}$.
Clearly, $F_{\ell}(a_1^{(n)},\ldots,a_{\ell}^{(n)},t_1,\ldots,t_{\ell})=0$ encodes the constraint  
$\mathrm{w}(\bar{v})\leq \bintoint(\bar{t})$.
Obviously, the degree of $F_{\ell}(a_1^{(n)},\ldots,a_{\ell}^{(n)},\bar{t}_1,\ldots,\bar{t}_{\ell})\in\mathbb{F}[a_1^{(n)},\ldots,a_{\ell}^{(n)}]$ is at most $\ell$.
By applying the map $\quadr()$ to $F_{\ell}$ (see Remark \ref{rem:quadratize}) we obtain the system of quadratic equations 
$$
\wce_{n,m} = \quadr\left(\lbrace F_{\ell} \rbrace\right)\;.
$$
\begin{Lemma}\label{lem: complexity wce}
$\wce_{n,m}(I_{\mathrm{MLD}})$ is a system of $\mathcal{O}(n\ell)$ quadratic equations in $\mathcal{O}(n\ell)$ variables.
\end{Lemma}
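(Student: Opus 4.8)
The plan is to read the bound off from two observations already in place just before the statement: once the instance's weight bound $\bar t=(\bar t_1,\ldots,\bar t_\ell)$ is substituted for the variables $t_1,\ldots,t_\ell$, the polynomial $F_\ell(a_1^{(n)},\ldots,a_\ell^{(n)},\bar t_1,\ldots,\bar t_\ell)$ lies in $\mathbb F[a_1^{(n)},\ldots,a_\ell^{(n)}]$ and has degree at most $\ell$, and $\wce_{n,m}(I_{\mathrm{MLD}})$ is $\quadr$ applied to this single polynomial. So the proof reduces to counting the monomials of the instantiated $F_\ell$ and then invoking Fact \ref{rem:quadratize}.

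First I would bound the number of monomials. Writing $F_\ell(a,\bar t)=\sum_{j=1}^{\ell}(\bar t_j+1)\,(a_j^{(n)}+\bar t_j)\prod_{h=j+1}^{\ell}(a_h^{(n)}+\bar t_h+1)$, each summand is a product of affine-linear forms in the pairwise-distinct variables $a_j^{(n)},\ldots,a_\ell^{(n)}$, hence multilinear with at most $2^{\ell-j+1}$ monomials; consequently $F_\ell(a,\bar t)$ is multilinear in $\ell$ variables and has at most $2^{\ell+1}-2$ monomials. Since $\ell=\lfloor\log_2 n\rfloor+1$ gives $2^\ell\le 2n$, this is $\mathcal O(n)$ monomials, each of degree at most $\ell$.

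Next I would apply $\quadr$ (Fact \ref{rem:quadratize}) monomial by monomial: a degree-$d$ monomial, with $2\le d\le\ell$, is handled by adjoining $d-2$ new variables together with $d-2$ new quadratic ``definition'' equations, after which the whole polynomial becomes a single additional quadratic equation. Summing over the at most $2^{\ell+1}-2$ monomials, each of degree at most $\ell$, yields at most $(2^{\ell+1}-2)(\ell-2)+1 = \mathcal O(n\ell)$ quadratic equations and at most $(2^{\ell+1}-2)(\ell-2)=\mathcal O(n\ell)$ fresh variables; adding the $\ell$ variables $a_1^{(n)},\ldots,a_\ell^{(n)}$ leaves the total number of variables $\mathcal O(n\ell)$, which is the claim.

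The key structural point — and the step that needs care — is the order of operations. The generic comparison polynomial $F$ of Proposition \ref{prop:weightconst} has degree $\ell+1$ and in fact $3^{\ell}-1=\Theta\!\big(n^{\log_2 3}\big)$ monomials, so quadratizing it before instantiating $\bar t$ would produce a number of equations that is polynomial but strictly superlinear in $n$. The linear-in-$n$ factor in the bound genuinely relies on substituting the known vector $\bar t$ first, which simultaneously collapses the degree to at most $\ell$ and the monomial count to at most $2^\ell=\mathcal O(n)$; after that, the argument is a routine count.
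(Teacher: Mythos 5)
Your proof is correct and follows essentially the same route as the paper's: both bound the support of the instantiated $F_{\ell}$ (degree at most $\ell$, at most $\mathcal{O}(2^{\ell})=\mathcal{O}(n)$ monomials since $2^{\ell}\le 2n$) and then apply $\quadr()$ monomial by monomial, each monomial costing at most $\ell$ new variables and equations. Your closing remark about substituting $\bar t$ \emph{before} quadratizing makes explicit a point the paper's one-line proof only uses implicitly, and is a worthwhile clarification rather than a different argument.
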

\begin{proof}
Since the degree of $F_{\ell}$ is at most $\ell$ and the size of its support, i.e. the number of monomials, is at most $2^{\ell}$, by applying $\quadr()$ we obtain a system with at most $2^{\ell}\ell$ equations in $2^{\ell}\ell$ variables.
However, $2^{\ell}\leq2n$, and so the system contains $\mathcal{O}(n\ell)$ equations in $\mathcal{O}(n\ell)$ variables.
\end{proof}

\subsection{MLD to MQ}
By combining the results of Sections \ref{subsec:paritycheck}, \ref{subsec:wcomp} and \ref{subsec:wcheck}, we construct the reduction $\alpha_{n,m}$, a function mapping MLD instances to MQ instances.
\begin{theorem}
Let $I_{\mathrm{MLD}} = (\bar{H},\bar{s},\bar{t})$ and let $\alpha_{n,m}:\mathrm{MLD} \rightarrow \mathrm{MQ}$ be defined by
$$
 \alpha_{n,m}(\bar{H},\bar{s},\bar{t}) = \pcce_{n,m}(\bar{H},\bar{s},\bar{t}) \cup \hwce_{n,m}(\bar{H},\bar{s},\bar{t}) \cup \wce_{n,m}(\bar{H},\bar{s},\bar{t})\, ;
 $$
  If $\bar{u}$ is a witness of $\alpha_{n,m}(I_{\mathrm{MLD}})$ then it is also a witness of $I_{\mathrm{MLD}}$.
\end{theorem}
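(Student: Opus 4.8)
The plan is to peel off the three encodings one at a time and read each of them back on the $v$-part of the witness. Write $\bar{u}$ as an assignment to the variables $v_1,\dots,v_n$, the counter variables $a^{(i)}_j$, and the auxiliary variables produced by the various applications of $\quadr(\cdot)$, and let $\bar{v}\in\mathbb{F}^n$ be the sub-tuple of $\bar u$ giving the values of $v_1,\dots,v_n$; this $\bar v$ is the candidate MLD witness (so, as the statement implicitly does, we identify $\bar u$ with its $v$-part). Since $\bar u$ satisfies $\alpha_{n,m}(I_{\mathrm{MLD}})$ it satisfies in particular every equation of $\pcce_{n,m}(\bar H,\bar s,\bar t)=\{f_i(\bar H,v,\bar s)=0\}_{i=1,\dots,m}$, and by the defining property of $\pcce_{n,m}$ (Section \ref{subsec:paritycheck}) this is exactly the statement $\bar H\bar v^{\top}=\bar s^{\top}$; so the parity-check constraint of $I_{\mathrm{MLD}}$ holds.

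Next I would handle $\hwce_{n,m}$. The key auxiliary observation --- which I would isolate as a short lemma on $\quadr(\cdot)$ --- is that \emph{any} assignment satisfying $\quadr(f)=0$ forces each new variable $y_k$ to take the value of the product it abbreviates (first $y_1=x_{i_1}x_{i_2}$, then $y_2=y_1x_{i_3}$, and so on up the chain), so that back-substitution shows the restriction of the assignment to the original variables satisfies $f=0$. Applying this to $\hwce_{n,m}=\quadr(\{a^{(i)}_j=a^{(i-1)}_j+(\prod_{h<j}a^{(i-1)}_h)v_i\})$, the values taken by the variables $a^{(i)}_j$ in $\bar u$ must satisfy the original (non-quadratic) counter recurrences evaluated at $\bar v$; then an induction on $i$, started from $a^{(0)}\equiv 0$, shows that the value of $a^{(i)}_j$ in $\bar u$ equals the polynomial $a^{(i)}_j(\bar v)$ of Section \ref{subsec:wcomp}. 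In particular the tuple $(a^{(n)}_1,\dots,a^{(n)}_\ell)$ read off from $\bar u$ equals $(a^{(n)}_1(\bar v),\dots,a^{(n)}_\ell(\bar v))$, whose integer value is $\mathrm{w}(\bar v)$ by Corollary \ref{cor: weight bar a}.

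Finally I would feed this into $\wce_{n,m}=\quadr(\{F_\ell\})$, where the $t$-variables have been specialised to $\bar t$. By the same soundness property of $\quadr(\cdot)$, the fact that $\bar u$ satisfies $\wce_{n,m}(\bar H,\bar s,\bar t)$ gives $F_\ell\big(a^{(n)}_1(\bar v),\dots,a^{(n)}_\ell(\bar v),\bar t_1,\dots,\bar t_\ell\big)=0$, i.e.\ $F\big(a^{(n)}(\bar v),\bar t\big)=0$; Proposition \ref{prop:weightconst} then yields $\bintoint(a^{(n)}(\bar v))\le\bintoint(\bar t)$, which combined with Corollary \ref{cor: weight bar a} is precisely $\mathrm{w}(\bar v)\le\bar t$. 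Putting the two constraints together, $\bar v$ satisfies the parity-check constraint and the weight constraint of $I_{\mathrm{MLD}}$, hence is a witness of $I_{\mathrm{MLD}}$, which proves the theorem.

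I expect the main obstacle to be the $\quadr(\cdot)$ soundness step used in the second and third paragraphs: Fact \ref{rem:quadratize} only asserts that the quadratic system is ``equivalent'' to the original equation, so to make the reduction's correctness airtight one must check that \emph{every} solution of the quadratic system --- not merely one obtained by the intended substitution --- restricts to a solution of the original, which is the induction-up-the-product-chain argument sketched above (and likewise, when several monomials or several polynomials are quadratized and their systems are joined, that the auxiliary variables of distinct blocks do not interact). Everything else is a direct unwinding of the definitions of $\pcce_{n,m}$, $\hwce_{n,m}$, $\wce_{n,m}$ together with Corollary \ref{cor: weight bar a} and Proposition \ref{prop:weightconst}.
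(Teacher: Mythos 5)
Your proof is correct and follows essentially the same route as the paper's: read off $\bar v$ from the $v$-variables of the witness, get the parity-check constraint from $\pcce_{n,m}$, and get the weight constraint from $\hwce_{n,m}$ together with Corollary \ref{cor: weight bar a} and Proposition \ref{prop:weightconst}. The only difference is that you make explicit the soundness of $\quadr(\cdot)$ (that every solution of the quadratized system forces the auxiliary variables to equal the products they abbreviate, hence restricts to a solution of the original equation), a point the paper takes for granted; your chain-of-substitutions argument for it is the right one.
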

\begin{proof}
By ordering the variables according to their first appearance in this paper, the first $n$ variables in $\alpha_{n,m}(I_{\mathrm{MLD}})$ are $v_1,\ldots,v_n$ and the first $n$ bits of a witness $u$ for $\alpha_{n,m}(I_{\mathrm{MLD}})$ are the values $\bar{v}_1,\ldots,\bar{v}_n$. Let us call $\bar{v}=(\bar{v}_1,\ldots,\bar{v}_n)$.\\
The set $\pcce_{n,m}(\bar{H},\bar{s},\bar{t})$ contains only equations in the variables $v_1,\ldots,v_n$.
% , so we have $f(\bar{v}) = f(\tau_n(\bar{u}))$. 
Therefore, since $\bar{u}$ is a witness for $\alpha_{n,m}(I_{\mathrm{MLD}})$, we have $f(\bar{u})=0$ for $f \in \pcce_{n,m}(\bar{H},\bar{s},\bar{t})$, meaning that $\bar{H}\bar{v}^{\top}= \bar{s}$.
\\
From Corollary \ref{cor: weight bar a}, we can use the polynomials in $\hwce_{n,m}(\bar{H},\bar{s},\bar{t})$ to compute a binary expansion of the weight of $\mathrm{w}(\bar{v})$. More precisely, we have that $$\bintoint\left(a^{(n)}_1(\bar{v}), \ldots, a^{(n)}_l(\bar{v})\right) = \mathrm{w}(\bar{v})\;.$$
\\
Finally,  since $F_{\ell}\left(a^{(n)}_1(\bar{v}), \ldots, a^{(n)}_l(\bar{v}), \bar{t}\right) = 0$, by Proposition \ref{prop:weightconst} we have  $\mathrm{w}(\bar{v}) \le \bintoint(\bar{t})$.
\end{proof}

\begin{theorem}
Let $(n,m)$ be the complexity parameters of $I_{\mathrm{MLD}}$ and let $I_{\mathrm{MQ}}=\alpha_{n,m}(I_{\mathrm{MLD}})$. Then, the complexity parameters $(\nmq,\mmq)$ of $I_{\mathrm{MQ}}$ are in $\mathcal{O}(n\log_2^2n)$.
\end{theorem}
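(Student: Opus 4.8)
The plan is to simply add up the contributions of the three encodings that constitute $\alpha_{n,m}$, using the three complexity lemmas already established. Recall that $I_{\mathrm{MQ}} = \alpha_{n,m}(I_{\mathrm{MLD}}) = \pcce_{n,m}(I_{\mathrm{MLD}}) \cup \hwce_{n,m}(I_{\mathrm{MLD}}) \cup \wce_{n,m}(I_{\mathrm{MLD}})$, so $\mmq$ is bounded by the sum of the numbers of equations of the three systems and $\nmq$ by the sum of the numbers of variables. Since by definition $\ell = \lfloor\log_2 n\rfloor + 1$, we have $\ell \in \mathcal{O}(\log_2 n)$, and I will use this throughout to convert the $\ell$-bounds into bounds in $n$.

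First I would invoke Lemma \ref{lem: complexity pcce}: $\pcce_{n,m}(I_{\mathrm{MLD}})$ contributes $m$ linear equations in $n$ variables, so, using the standing assumption $m \le n$, this is $\mathcal{O}(n)$ equations and $\mathcal{O}(n)$ variables. Next, by Lemma \ref{lem: complexity hwce}, $\hwce_{n,m}(I_{\mathrm{MLD}})$ contributes $\mathcal{O}(n\ell^2)$ quadratic equations and $\mathcal{O}(n\ell^2)$ variables, i.e. $\mathcal{O}(n\log_2^2 n)$ of each. Finally, Lemma \ref{lem: complexity wce} gives that $\wce_{n,m}(I_{\mathrm{MLD}})$ contributes $\mathcal{O}(n\ell) = \mathcal{O}(n\log_2 n)$ quadratic equations and as many variables.

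Adding the three contributions, both $\nmq$ and $\mmq$ are $\mathcal{O}(n) + \mathcal{O}(n\log_2^2 n) + \mathcal{O}(n\log_2 n) = \mathcal{O}(n\log_2^2 n)$, the $\hwce$ term being dominant; this is the claimed bound. I do not expect a genuine obstacle here: the only point requiring a little care is checking that the union does not create any new equations or variables beyond the disjoint sum of the three pieces. It does not, since $\alpha_{n,m}$ is defined as the set-theoretic union of three systems whose polynomials are already quadratic (or linear), so no further application of $\quadr$ or of the standard-form reduction of Lemma \ref{lem:systostandard} is required; and one should note that the implicit constants in the three $\mathcal{O}(\cdot)$ estimates are absolute, so that their sum is again an $\mathcal{O}(\cdot)$ estimate in $n$ alone.
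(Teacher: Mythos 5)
Your proof is correct and follows essentially the same route as the paper: it cites Lemmas \ref{lem: complexity pcce}, \ref{lem: complexity hwce} and \ref{lem: complexity wce} for the three encodings and sums the contributions, with $\hwce$ dominating at $\mathcal{O}(n\ell^2)=\mathcal{O}(n\log_2^2 n)$. If anything, your write-up is slightly more explicit than the paper's, which stops at listing the three bounds; your added remarks on $m\le n$ and on the union introducing nothing new are harmless and sound.
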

\begin{proof}
We analyse the 3 sub-problems separately. 
\begin{itemize}
    \item According to Lemma \ref{lem: complexity pcce}, $\pcce_{n,m}(I_{\mathrm{MLD}})$ is a linear system with $m$ equations and $n$ variables. 
    \item As described by Lemma \ref{lem: complexity hwce}, $\hwce_{n,m}(I_{\mathrm{MLD}})$ is a quadratic system of $\mathcal{O}(n\ell^2)$ equations in $\mathcal{O}(n\ell^2)$ variables.
    \item In the weight-constraint step described in Subsection \ref{subsec:wcheck}, we introduce a quadratic system $\wce_{n,m}(I_{\mathrm{MLD}})$ containing $\mathcal{O}(n\ell)$ equations and variables, as stated in Lemma \ref{lem: complexity wce}.
\end{itemize}
By putting everything together we can compute the complexity parameters $(\nmq,\mmq)$ of $\alpha_{n,m}(I_{\mathrm{MLD}})$.

\end{proof}

\section{MQ to MLD reduction}\label{sec:MQ-MLD}
In this section we construct a reduction $\beta:\mathrm{MQ}\to\mathrm{MLD}$. We consider a general system of equations and we take it to standard form $S$, as defined in Definition \ref{def:standardform}, by using the procedure hinted in the proof of Lemma \ref{lem:systostandard}. The result of the process will be an instance $\beta(I_{\mathrm{MQ}}) = (H,s,t)\in\mathrm{MLD}$, with $H \in \mathbb{F}^{m\times n}$ for some $m,n,t \in \mathbb{Z}^+$, $s \in \mathbb{F}^m$. 
The key idea is to regard the two different types of equations in $S$ separately. First we create one part of the MLD instance according to the quadratic equations in $S$, and then we complete the work by integrating the linear ones. 
We also build a transformation that takes as input a solution of $\beta(I_{\mathrm{MQ}})$ and outputs a solution of $I_{\mathrm{MQ}}$, proving that $\beta$ is actually a reduction between the two problems.

\subsection{Quadratic equations}\label{sec: quadratic}
Consider a system $S$ containing solely the equation $xy+z =0$ and let $I= \langle xy+z\rangle \subset \mathbb{F}[x,y,z]$ be the principal ideal generated by $S$. The associated variety $\mathcal{V}_{\mathbb{F}}(I) \subset \mathbb{F}^3$ is 
\begin{equation}\label{eq:variety}
\mathcal{V}_{\mathbb{F}}(I) = \lbrace (0,0,0), (1,0,0),(0,1,0),(1,1,1)\rbrace.
\end{equation}
\begin{Lemma}\label{lem:sighat}
Let $\widehat{C}$ be the linear code generated by the generator matrix 
\begin{equation}\label{eq: def G}
\widehat{G} = \begin{bmatrix}
1&0&0&1&1&0&0&1&1&1\\
0&1&0&0&0&1&1&1&1&1\\
0&0&1&1&1&1&1&1&1&1
\end{bmatrix}\;,
\end{equation}
let $$
\widehat{\epsilon} = (0,0,0,0,0,0,0,1,1,1)\;,
$$
let $\widehat{\Sigma}$ be the coset $\widehat{\epsilon}+\widehat{C}\subseteq \mathbb{F}^{10}$ and let $v\in\widehat{\Sigma}$.
\\
Then, $v$ has weight at most $3$ if and only if $\tau_3(v)\in\mathcal{V}_{\mathbb{F}}(I)$ as in \eqref{eq:variety}.
\end{Lemma}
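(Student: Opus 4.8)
The plan is to exploit the fact that $\widehat{C}$ has dimension exactly $3$, so that the coset $\widehat{\Sigma}$ contains only $8$ vectors, and then to verify the claim by a direct finite inspection. First I would write a generic element of $\widehat{\Sigma}$ as $v_a = \widehat{\epsilon} + a\widehat{G}$ with $a=(a_1,a_2,a_3)\in\mathbb{F}^3$; since the first three columns of $\widehat{G}$ form $I_3$, the matrix has rank $3$ and the map $a\mapsto v_a$ is a bijection from $\mathbb{F}^3$ onto $\widehat{\Sigma}$. The key observation is that $\tau_3(v_a)=\tau_3(\widehat{\epsilon})+a=a$, because the first three coordinates of $\widehat{\epsilon}$ are zero and the first three columns of $\widehat{G}$ are the standard basis vectors. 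Hence $\tau_3$ restricted to $\widehat{\Sigma}$ is precisely the inverse of this parametrisation, and the statement reduces to: for every $a\in\mathbb{F}^3$, $\mathrm{w}(v_a)\le 3$ if and only if $a\in\mathcal{V}_{\mathbb{F}}(I)=\{(0,0,0),(1,0,0),(0,1,0),(1,1,1)\}$.

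Next I would compute $\mathrm{w}(v_a)$ for all eight values of $a$ by working coordinate-wise: the $j$-th entry of $v_a$ is $\widehat{\epsilon}_j+\langle a,\widehat{g}^{(j)}\rangle$, where $\widehat{g}^{(j)}$ is the $j$-th column of $\widehat{G}$. Grouping the coordinates by the repeated column patterns of $\widehat{G}$ — columns $1,2,3$ give $I_3$ (and $\widehat{\epsilon}$ is $0$ there), columns $4,5$ both equal $(1,0,1)^{\top}$, columns $6,7$ both equal $(0,1,1)^{\top}$, and columns $8,9,10$ all equal $(1,1,1)^{\top}$ with $\widehat{\epsilon}$ supported exactly on these three — one obtains the closed formula
$$
\mathrm{w}(v_a)=\mathrm{w}(a)+2\,[a_1\ne a_3]+2\,[a_2\ne a_3]+3\,[a_1+a_2+a_3=0],
$$
where $[\cdot]$ is the indicator of the stated condition and the right-hand side is evaluated over the integers. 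Substituting the eight choices of $a$, the four elements of $\mathcal{V}_{\mathbb{F}}(I)$ all give weight exactly $3$, whereas $a=(0,0,1)$ gives weight $5$ and $a\in\{(1,0,1),(0,1,1),(1,1,0)\}$ give weights $7,7,9$ respectively; in particular every $a\notin\mathcal{V}_{\mathbb{F}}(I)$ yields $\mathrm{w}(v_a)>3$. This establishes the equivalence in both directions.

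I do not expect a genuine obstacle: the lemma is a finite verification, and the only care needed is bookkeeping — making sure the parametrisation $a\mapsto v_a$ is arranged so that $\tau_3$ reads off $a$, and that the coordinate grouping faithfully reproduces $\widehat{G}$ and $\widehat{\epsilon}$. For the write-up I would likely replace the formula by a short table listing $a$, $v_a$ and $\mathrm{w}(v_a)$ across the eight cosets, which makes the check fully transparent; the formula is merely a convenient way to organise the computation. The actual content of the lemma — the reason this specific $\widehat{G}$ and $\widehat{\epsilon}$ are chosen — is that the points of $\mathcal{V}_{\mathbb{F}}(I)$ for $xy+z=0$ are exactly the length-$3$ prefixes that get pushed down to the minimum coset weight $3$, which is what the computation confirms.
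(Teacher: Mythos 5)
Your proposal is correct and takes essentially the same route as the paper, whose entire proof is ``by direct inspection of the $8$ vectors in $\widehat{\Sigma}$'' — you simply carry out that inspection explicitly, and your parametrisation $v_a=\widehat{\epsilon}+a\widehat{G}$ with $\tau_3(v_a)=a$, the column-grouping formula, and the resulting weights $3,3,3,3$ on $\mathcal{V}_{\mathbb{F}}(I)$ versus $5,7,7,9$ off it all check out (and also confirm the weight gap noted in Remark \ref{rem:weightgap}). No gap.
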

\begin{proof}
It follows by direct inspection of the $8$ vectors in $\widehat{\Sigma}$.
\end{proof}
The truncation map present in the statement of the previous Lemma, $\tau_3 : \mathbb{F}^{10} \rightarrow \mathbb{F}^{3}$ defined as $\tau_3(\bar{v}_1,\bar{v}_2,\bar{v}_3,\bar{v}_4,\bar{v}_5,\bar{v}_6,\bar{v}_7,\bar{v}_8,\bar{v}_9,\bar{v}_{10}) = (\bar{v}_1, \bar{v}_2, \bar{v}_3)$, 
	can be represented in matrix form as
	\begin{equation}\nonumber
		M_{\tau_3} = 
		\left[\begin{array}{c} 
	\mathbb{I}_3\\ 
	\hline 
	\mathbf{0}
\end{array}\right]
	\;,
	\end{equation}
	where $\mathbb{I}_3$ is the identity matrix of dimension $3$ and $\mathbf{0}$ is the $3 \times 7$ zero matrix, namely
	$$
	\tau_3(\bar{v})=\bar{v}M_{\tau_3}\;.
	$$

\begin{Prop}\label{prop: singlequadeq}
Let $\widehat{H}$ be a parity-check matrix for the code $\widehat{C}$ generated by $\widehat{G}$ as defined in Lemma \ref{lem:sighat}, let $\widehat{s}=\widehat{H}\cdot \widehat{\epsilon}^{\top}$ and let $\widehat{t}=3$. Let $\widehat{W}\subset \widehat{\Sigma}$ be the set of witnesses of the MLD instance $I_{\mathrm{MLD}}=(\,\widehat{H},\widehat{s},\widehat{t}\,)$ and let $\mathcal{V}_{\mathbb{F}}(I)$ be as in \eqref{eq:variety}. Then $\mathcal{V}_{\mathbb{F}}(I)=\tau_3(\widehat{W})$.
\end{Prop}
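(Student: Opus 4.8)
The plan is to reduce the statement to Lemma~\ref{lem:sighat} after identifying the solution set of the parity-check equation with the coset $\widehat{\Sigma}$. First I would observe that, by definition of a parity-check matrix and of $\widehat{s}$, a vector $v\in\mathbb{F}^{10}$ satisfies $\widehat{H}v^{\top}=\widehat{s}^{\top}$ if and only if $\widehat{H}(v-\widehat{\epsilon})^{\top}=0$, i.e. if and only if $v-\widehat{\epsilon}\in\widehat{C}$, i.e. if and only if $v\in\widehat{\epsilon}+\widehat{C}=\widehat{\Sigma}$. Hence the witnesses of $I_{\mathrm{MLD}}=(\widehat{H},\widehat{s},\widehat{t})$ are exactly the vectors of $\widehat{\Sigma}$ of weight at most $\widehat{t}=3$; that is, $\widehat{W}=\{v\in\widehat{\Sigma}\mid \mathrm{w}(v)\leq 3\}$.

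Next I would invoke Lemma~\ref{lem:sighat}, which states precisely that for $v\in\widehat{\Sigma}$ one has $\mathrm{w}(v)\leq 3$ if and only if $\tau_3(v)\in\mathcal{V}_{\mathbb{F}}(I)$. Combining this with the previous paragraph gives $\widehat{W}=\{v\in\widehat{\Sigma}\mid \tau_3(v)\in\mathcal{V}_{\mathbb{F}}(I)\}$, from which the inclusion $\tau_3(\widehat{W})\subseteq\mathcal{V}_{\mathbb{F}}(I)$ is immediate.

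For the reverse inclusion I would show that $\tau_3$ restricted to $\widehat{\Sigma}$ is a bijection onto $\mathbb{F}^3$. Indeed, $\tau_3$ sends the three rows of $\widehat{G}$ to $(1,0,0)$, $(0,1,0)$, $(0,0,1)$, so it maps $\widehat{C}$ onto $\mathbb{F}^3$; since $\dim_{\mathbb{F}}\widehat{C}=3$ and $\abs{\widehat{C}}=8=\abs{\mathbb{F}^3}$, this map is a bijection. As $\tau_3(\widehat{\epsilon})=(0,0,0)$, the affine map $\tau_3$ carries $\widehat{\Sigma}=\widehat{\epsilon}+\widehat{C}$ bijectively onto $\mathbb{F}^3$. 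Consequently, for every point $P\in\mathcal{V}_{\mathbb{F}}(I)$ there is a (unique) $v_P\in\widehat{\Sigma}$ with $\tau_3(v_P)=P$, and Lemma~\ref{lem:sighat} guarantees $\mathrm{w}(v_P)\leq 3$, so $v_P\in\widehat{W}$ and hence $P\in\tau_3(\widehat{W})$. This yields $\mathcal{V}_{\mathbb{F}}(I)\subseteq\tau_3(\widehat{W})$ and completes the argument.

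Since everything hinges on the eight explicit vectors of $\widehat{\Sigma}$, the only real ``obstacle'' is the bookkeeping already carried out in the proof of Lemma~\ref{lem:sighat}: there is no conceptual difficulty beyond correctly matching the coset description of the MLD solution space with the variety via the truncation map, and checking that $\tau_3$ does not collapse distinct witnesses onto the same point.
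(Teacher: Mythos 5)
Your proposal is correct and follows essentially the same route as the paper, which simply invokes Lemma~\ref{lem:sighat} together with the coset--syndrome bijection; you have merely spelled out that bijection and the surjectivity of $\tau_3$ restricted to $\widehat{\Sigma}$, details the paper leaves implicit. No issues.
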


\begin{proof}
It follows by Lemma \ref{lem:sighat} and the well-known bijection between cosets and syndromes (once the parity-check matrix has been chosen).
\end{proof}

\begin{Rem}\label{rem:weightgap}
The witnesses of the MLD instance $(\,\widehat{H},\widehat{s},\widehat{t}\,)$ we constructed, i.e. solutions of $\widehat{H}v^{\top}=\widehat{s}^{\top}$ with weight at most $\widehat{t}=3$, have Hamming weight exactly $3$, which is the weight of any coset leader (e.g. $\widehat{\epsilon}$). Notice that the remaining solutions of $\widehat{H}v^{\top}=\widehat{s}^{\top}$ have weight at least $5$. This gap in the weight is crucial for the generalization we are going to give next. 
\end{Rem}

We now extend the construction in Proposition \ref{prop: singlequadeq} to a standard-form system $S$ that contains more than one quadratic equation. Assume that $S$ contains $q$ quadratic equations $f_i$'s of the form $x_iy_i+z_i=0$,  for $i = 1,\ldots,q$. Recall that by Definition \ref{def:standardform}, such equations do not share any variable with each other. Consider the ideal $J = \langle f_1, \ldots f_q \rangle \subset \mathbb{F}[\{x_i,y_i,z_i\}_{i=1,\ldots,q}]$. The variety $\mathcal{V}_{\mathbb{F}}(J)$ can be seen as
\begin{equation}\label{eq:varietyext}
\mathcal{V}_{\mathbb{F}}(J) = \underbrace{\mathcal{V}_{\mathbb{F}}(I) \times \mathcal{V}_{\mathbb{F}}(I) \times \cdots \times \mathcal{V}_{\mathbb{F}}(I)  }_\text{q} \subset \mathbb{F}^{3q}\;,
\end{equation}
where $\mathcal{V}_{\mathbb{F}}(I)$ is as in \eqref{eq:variety}.
To address the case of standard-form systems consisting of only quadratic equations, we construct a new parity-check matrix as the diagonal block matrix $\widetilde{H}$ of size $7q\times 10q$
\begin{equation}\nonumber
\widetilde{H} = \begin{bmatrix}
\widehat{H} & \cdots & 0\\
\vdots & \ddots & \vdots\\
0 & \cdots & \widehat{H}
\end{bmatrix}\;,
\end{equation}
where $\widehat{H}$ is a parity-check matrix for the code generated by $\widehat{G}$ in Equation \eqref{eq: def G}.
Obviously, the null space of $\widetilde{H}$ is the direct product of $q$ copies of $\widehat{C}$ i.e. 
\begin{equation}\nonumber
\underbrace{\widehat{C} \times \widehat{C} \times \cdots \times \widehat{C} }_\text{q} \subset \mathbb{F}^{10q}\;.
\end{equation}
Define $\widetilde{\Sigma} = \overbrace{\widehat{\Sigma} \times \widehat{\Sigma} \times \cdots \times \widehat{\Sigma} }^\text{q}$, $\widetilde{t} = 3q$ and $\widetilde{\epsilon} = (\,\overbrace{\widehat{\epsilon} \| \widehat{\epsilon} \| \cdots \| \widehat{\epsilon}}^\text{q}\,)$ where $\|$ denotes vector concatenation. With this setting, for any $\widetilde{v}\in\widetilde{\Sigma}$, we can write $\widetilde{v}  = \left(\widehat{v}^{(1)} \| \widehat{v}^{(2)} \| \cdots \| \widehat{v}^{(q)}\right)$ with $\widehat{v}^{(i)} \in \widehat{\Sigma}$ for any $i$.
In the following lemma $\widehat{W}$ is the set of vectors in $\widehat{\Sigma}$ with weight at most $\widehat{t}$, as defined in Proposition \ref{prop: singlequadeq}.

\begin{Lemma}\label{lem:sigmacomp}
	Let $\widetilde{W} = \lbrace \widehat{v} \in \widetilde{\Sigma} \mid \mathrm{w}(\widehat{v}) \le \widetilde{t} \rbrace $. Then $\widetilde{W} = \underbrace{\widehat{W}\times \widehat{W}\times \cdots\times \widehat{W}}_\text{q}$.
\end{Lemma}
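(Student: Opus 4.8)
The plan is to prove the set equality $\widetilde{W} = \widehat{W}^{\times q}$ by double inclusion, exploiting the product structure of $\widetilde{\Sigma}$ together with the crucial ``weight gap'' observed in Remark \ref{rem:weightgap}. Recall that every $\widetilde{v}\in\widetilde{\Sigma}$ decomposes uniquely as $\widetilde{v} = (\widehat{v}^{(1)}\|\cdots\|\widehat{v}^{(q)})$ with $\widehat{v}^{(i)}\in\widehat{\Sigma}$, and that Hamming weight is additive under concatenation, i.e. $\mathrm{w}(\widetilde{v}) = \sum_{i=1}^q \mathrm{w}(\widehat{v}^{(i)})$.

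The inclusion $\widehat{W}^{\times q}\subseteq\widetilde{W}$ is the easy direction: if $\widehat{v}^{(i)}\in\widehat{W}$ for every $i$, then $\mathrm{w}(\widehat{v}^{(i)})\le\widehat{t}=3$, so by additivity $\mathrm{w}(\widetilde{v})\le 3q = \widetilde{t}$; since also $\widetilde{v}\in\widehat{\Sigma}^{\times q} = \widetilde{\Sigma}$, we get $\widetilde{v}\in\widetilde{W}$. For the reverse inclusion $\widetilde{W}\subseteq\widehat{W}^{\times q}$, take $\widetilde{v}\in\widetilde{W}$, so $\widetilde{v}\in\widetilde{\Sigma}$ and $\mathrm{w}(\widetilde{v})\le 3q$. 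Each component $\widehat{v}^{(i)}$ lies in $\widehat{\Sigma}$, and here is where the weight gap enters: by Lemma \ref{lem:sighat} (or the inspection in Remark \ref{rem:weightgap}), every vector of $\widehat{\Sigma}$ has weight either exactly $3$ (the coset leaders, which are precisely the elements of $\widehat{W}$) or at least $5$. Hence $\mathrm{w}(\widehat{v}^{(i)})\ge 3$ for all $i$, so $\sum_i \mathrm{w}(\widehat{v}^{(i)})\ge 3q$; combined with $\sum_i\mathrm{w}(\widehat{v}^{(i)}) = \mathrm{w}(\widetilde{v})\le 3q$ this forces $\mathrm{w}(\widehat{v}^{(i)}) = 3$ for every $i$. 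If some component had weight $\ge 5$, the total would already exceed $3q$ (since the other $q-1$ components contribute at least $3(q-1)$, giving a total of at least $3q+2$), a contradiction. Therefore each $\widehat{v}^{(i)}$ has weight exactly $3$, hence $\widehat{v}^{(i)}\in\widehat{W}$, and $\widetilde{v}\in\widehat{W}^{\times q}$.

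The main (and really the only) obstacle is making sure the weight gap is genuinely available and correctly invoked: the argument collapses if $\widehat{\Sigma}$ contained vectors of weight $4$, because then a component of weight $4$ together with $q-1$ components of weight $3$ could still sit under the bound $3q$ only if another component compensated, but more importantly one could not conclude each component is a coset leader. Since Remark \ref{rem:weightgap} records that the non-witness solutions of $\widehat{H}v^\top = \widehat{s}^\top$ have weight at least $5$ while the witnesses have weight exactly $3$, the pigeonhole step is clean. I would therefore state explicitly at the start of the proof that we use this gap, then carry out the two short inclusions as above; no computation beyond the additivity of weight and the arithmetic $3q + 2 > 3q$ is needed.
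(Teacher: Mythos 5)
Your proof is correct and takes essentially the same route as the paper's: both inclusions rest on the additivity of Hamming weight under concatenation and on the weight gap of Remark \ref{rem:weightgap}, which forces every component $\widehat{v}^{(i)}$ of a vector in $\widetilde{W}$ to have weight exactly $3$ and hence to lie in $\widehat{W}$. You merely spell out the pigeonhole arithmetic a bit more explicitly than the paper does; nothing substantive differs.
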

\begin{proof}
	Notice that, by Remark \ref{rem:weightgap}, $\mathrm{w}(\widehat{v})=\widehat{t}= 3$ for every $\widehat{v} \in \widehat{W}$. So $\mathrm{w}(\widetilde{v}) = \tilde{t} = 3q$ for every $\widetilde{v} \in \widehat{W}\times \widehat{W}\times \cdots\times \widehat{W}$. Therefore $\widetilde{W} \supseteq \widehat{W}\times \widehat{W}\times \cdots\times \widehat{W}$.
\\	
	To prove the other inclusion consider $\widetilde{v} 
	\in \widetilde{W} $.
	%\smallsetminus \Sigma_{\hat{t}}\times \Sigma_{\hat{t}}\times \cdots\times \Sigma_{\hat{t}}$.
	Then $\widetilde{v}$ can be written as a concatenation of $q$ vectors in $\widehat{\Sigma}$ and, by Remark \ref{rem:weightgap}, each of such vectors has Hamming weight at least $3$, and $\mathrm{w}(\widetilde{v}) =\sum_{i=1}^q\mathrm{w}(\widehat{v}^{(i)})$.
	If there is a $\widehat{v}^{(i)}$ with weight more than $3$, then the weight of $\widetilde{v}$ is strictly larger that $3q$. Therefore, all $\widehat{v}^{(i)}$ must have weight exactly $3$. This proves $\widetilde{W} \subseteq \widehat{W}\times \widehat{W}\times \cdots\times \widehat{W}$.
\end{proof}
Consider the truncation map $\tau: \mathbb{F}^{10q} \rightarrow \mathbb{F}^{3q}$ whose matrix representation is
\begin{equation}\label{eq def Mtau}
M_{\tau} = \begin{bmatrix}
	M_{\tau_3} & \cdots & 0\\
	\vdots & \ddots & \vdots \\
	0 & \cdots & M_{\tau_3}
\end{bmatrix}\;.
\end{equation}
The following lemma proves a useful property of $\tau$ that comes at hand to prove the subsequent theorem.
\begin{Lemma}\label{lem:projectionextended}
	Let $\widetilde{v} \in \widetilde{\Sigma}$ then $\tau(\widetilde{v}) = \left( \tau_3(\widehat{v}^{(1)}),\ldots, \tau_3(\widehat{v}^{(q)})\right)$ where $\widehat{v}^{(i)} \in \widehat{\Sigma}$ for every $i=1,\ldots,q$.
\end{Lemma}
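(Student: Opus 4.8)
The plan is to reduce everything to block matrix multiplication. First I would recall that, by definition, $\widetilde{\Sigma} = \widehat{\Sigma}\times\cdots\times\widehat{\Sigma}$ ($q$ factors), so every $\widetilde{v}\in\widetilde{\Sigma}$ admits a unique decomposition $\widetilde{v} = \left(\widehat{v}^{(1)}\|\cdots\|\widehat{v}^{(q)}\right)$ with each $\widehat{v}^{(i)}\in\widehat{\Sigma}\subseteq\mathbb{F}^{10}$, where the $i$-th block occupies coordinates $10(i-1)+1,\ldots,10i$. Likewise, $\tau$ acts as $\tau(\widetilde{v}) = \widetilde{v}M_{\tau}$ with $M_{\tau}$ the $10q\times 3q$ block-diagonal matrix of \eqref{eq def Mtau}, whose $i$-th diagonal block is the $10\times 3$ matrix $M_{\tau_3}$ and whose off-diagonal blocks are zero.

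Next I would simply carry out the product $\widetilde{v}M_{\tau}$ blockwise. Partitioning the row vector $\widetilde{v}$ into the $q$ blocks above and $M_{\tau}$ into its $q\times q$ grid of $10\times 3$ sub-blocks, the $j$-th block of $\widetilde{v}M_{\tau}$ equals $\sum_{i=1}^q \widehat{v}^{(i)}\,(M_{\tau})_{ij}$; since $(M_{\tau})_{ij}=\delta_{ij}M_{\tau_3}$, this sum collapses to $\widehat{v}^{(j)}M_{\tau_3}$, which is precisely $\tau_3(\widehat{v}^{(j)})$ by the matrix description of $\tau_3$ recalled right after Lemma \ref{lem:sighat}. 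Concatenating over $j=1,\ldots,q$ yields $\tau(\widetilde{v}) = \left(\tau_3(\widehat{v}^{(1)})\|\cdots\|\tau_3(\widehat{v}^{(q)})\right)$, which is exactly the claim.

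There is essentially no obstacle here: the only points to verify are bookkeeping — that the coordinate ranges of the blocks of $\widetilde{v}$ line up with the row-blocks of $M_{\tau}$ (both split $10q$ into $q$ consecutive groups of $10$) and that the output blocks line up ($3q$ into $q$ groups of $3$) — together with the fact that the block decomposition of $\widetilde{v}$ is legitimate, which is immediate from $\widetilde{\Sigma}$ being defined as a product. Hence the lemma is a direct consequence of the block-diagonal shape of $M_{\tau}$, and I would keep the written proof to these few lines.
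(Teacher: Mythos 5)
Your proposal is correct and follows essentially the same route as the paper: both write $\tau(\widetilde{v})=\widetilde{v}M_{\tau}$, decompose $\widetilde{v}$ into its $q$ blocks of length $10$, and use the block-diagonal structure of $M_{\tau}$ to collapse the product to $\left(\widehat{v}^{(1)}M_{\tau_3},\ldots,\widehat{v}^{(q)}M_{\tau_3}\right)=\left(\tau_3(\widehat{v}^{(1)}),\ldots,\tau_3(\widehat{v}^{(q)})\right)$. Your extra bookkeeping about coordinate ranges is fine but not needed beyond what the paper's one-line computation already conveys.
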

\begin{proof}
	Considering the matrix representation of $\tau$, we obtain
	\begin{equation}\nonumber%\label{eq:extendedpi}
	\begin{split}
		\tau(\widetilde{v}) &= \widetilde{v}M_{\tau} \\
		& =
\begin{pmatrix}
\widehat{v}^{(1)} \|
\cdots \|
\widehat{v}^{(q)}
\end{pmatrix}
		\begin{bmatrix}
	M_{\tau_3} & \cdots & 0\\
	\vdots & \ddots & \vdots \\
	0 & \cdots & M_{\tau_3}
\end{bmatrix}  \\ &= \left( \widehat{v}^{(1)}M_{\tau_3},\ldots,\widehat{v}^{(q)}M_{\tau_3}\right)\\
&= \left( \tau_3(\widehat{v}^{(1)}),\ldots, \tau_3(\widehat{v}^{(q)})\right)\;.
\end{split}
	\end{equation}
\end{proof}

 \begin{Prop}\label{prop: multiplequadeq}
 	Set $\widetilde{s} = \widetilde{H}\widetilde{\epsilon}^{\top}$, then $\widetilde{W}$ solves the MLD instance $\left( \widetilde{H},\widetilde{s},\widetilde{t} \right)$. Moreover $\mathcal{V}_{\mathbb{F}}(J) = \tau(\widetilde{W})$.
 \end{Prop}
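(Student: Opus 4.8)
The plan is to split the statement of Proposition~\ref{prop: multiplequadeq} into its two assertions and prove them in sequence, reducing everything to the single-equation case already established in Lemma~\ref{lem:sighat} and Proposition~\ref{prop: singlequadeq}.

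First I would address the claim that $\widetilde{W}$ solves the MLD instance $(\widetilde{H},\widetilde{s},\widetilde{t})$. By construction $\widetilde{s} = \widetilde{H}\widetilde{\epsilon}^{\top}$, so for any $v$ we have $\widetilde{H}v^{\top} = \widetilde{s}^{\top}$ if and only if $v - \widetilde{\epsilon}$ lies in the null space of $\widetilde{H}$, i.e. if and only if $v \in \widetilde{\epsilon} + (\widehat{C}\times\cdots\times\widehat{C}) = \widetilde{\Sigma}$. Hence the set of solutions of the parity-check equation is exactly $\widetilde{\Sigma}$, and the witnesses of the MLD instance are precisely those $v \in \widetilde{\Sigma}$ with $\mathrm{w}(v)\le\widetilde{t}$, which is the definition of $\widetilde{W}$. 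This part is essentially bookkeeping using the coset/syndrome correspondence already invoked in Proposition~\ref{prop: singlequadeq}.

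Next, for the identity $\mathcal{V}_{\mathbb{F}}(J) = \tau(\widetilde{W})$, I would combine three ingredients: the product description \eqref{eq:varietyext} of $\mathcal{V}_{\mathbb{F}}(J)$ as $\mathcal{V}_{\mathbb{F}}(I)^{\times q}$; Lemma~\ref{lem:sigmacomp}, which gives $\widetilde{W} = \widehat{W}\times\cdots\times\widehat{W}$; and Lemma~\ref{lem:projectionextended}, which shows $\tau$ acts componentwise as $\tau_3$ on each block. Putting these together,
$$
\tau(\widetilde{W}) = \tau(\widehat{W}\times\cdots\times\widehat{W}) = \tau_3(\widehat{W})\times\cdots\times\tau_3(\widehat{W}),
$$
and by Proposition~\ref{prop: singlequadeq} each factor $\tau_3(\widehat{W})$ equals $\mathcal{V}_{\mathbb{F}}(I)$, so $\tau(\widetilde{W}) = \mathcal{V}_{\mathbb{F}}(I)^{\times q} = \mathcal{V}_{\mathbb{F}}(J)$ by \eqref{eq:varietyext}. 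The only mild subtlety is checking that $\tau$ of a Cartesian product of sets of vectors is the Cartesian product of the $\tau_3$-images, which follows immediately from the componentwise formula in Lemma~\ref{lem:projectionextended} applied to each element.

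I do not expect any real obstacle here: all the heavy lifting (the weight gap in Remark~\ref{rem:weightgap}, the explicit coset computation, the block structure of $\widetilde{H}$ and $M_\tau$) has already been done in the preceding lemmas, and this proposition is the routine assembly step. If anything, the one point deserving a careful sentence is that no ``cross terms'' appear when intersecting the weight constraint with the product structure --- but that is exactly what Lemma~\ref{lem:sigmacomp} guarantees, so the proof can simply cite it.
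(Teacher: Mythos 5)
Your proposal is correct and follows essentially the same route as the paper's proof: the first claim is handled via the coset/syndrome correspondence identifying the solution set of $\widetilde{H}v^{\top}=\widetilde{s}^{\top}$ with $\widetilde{\Sigma}=\widetilde{\epsilon}+(\widehat{C}\times\cdots\times\widehat{C})$, and the second via the same chain $\tau(\widetilde{W})=\tau_3(\widehat{W})\times\cdots\times\tau_3(\widehat{W})=\mathcal{V}_{\mathbb{F}}(I)^{\times q}=\mathcal{V}_{\mathbb{F}}(J)$ citing Lemmas~\ref{lem:sigmacomp} and~\ref{lem:projectionextended}, Proposition~\ref{prop: singlequadeq}, and \eqref{eq:varietyext}. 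No gaps; your remark that $\tau$ distributes over the Cartesian product is exactly the content of Lemma~\ref{lem:projectionextended} as the paper uses it.
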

 \begin{proof}
 We need to prove that given $\widetilde{v} \in \widetilde{W}$ it holds $\widetilde{H}\widetilde{v}^\top = \widetilde{s}$ and $\mathrm{w}(\widetilde{v}) \le \widetilde{t}$.
 The null space of $\widetilde{H}$ is 
 $\widetilde{H}^\bot = \widehat{C} \times \widehat{C} \times \cdots \times \widehat{C} $. 
 Observe that 
 \begin{equation}\nonumber
 \begin{split}
 \widetilde{\Sigma} &= \widehat{\Sigma} \times \widehat{\Sigma} \times \cdots \times \widehat{\Sigma} \\ &= \left( \widehat{C}+\widehat{\epsilon} \right) \times \left( \widehat{C}+\widehat{\epsilon} \right) \times \cdots \times \left( \widehat{C}+\widehat{\epsilon} \right) \\ 
 &= \left( \widehat{C} \times \widehat{C} \times \cdots \times \widehat{C}  \right) + \widetilde{\epsilon} \\
 &= \widetilde{H}^\bot + \widetilde{\epsilon} = \lbrace v + \widetilde{\epsilon} \mid v \in \widehat{C} \times \widehat{C} \times \cdots \times \widehat{C} \rbrace
 \end{split}
 \end{equation} 
 Therefore, for every $\widetilde{v} \in \widetilde{\Sigma}$, we have $\widetilde{H}^\bot\widetilde{v}^\top = \widetilde{H}^\bot \widetilde{v}^\top + \widetilde{H}^\bot \widetilde{\epsilon}^\top = \widetilde{s}$. Considering $\widetilde{v} \in \widetilde{W} \subset \widetilde{\Sigma}$ we also obtain $\mathrm{w}(\widetilde{v}) \le \widetilde{t}$.
 
 For the second claim we have, by lemmas \ref{lem:sigmacomp} and \ref{lem:projectionextended}, that
 \begin{equation}\nonumber
 	\begin{split}
 	\tau\left( \widetilde{W} \right) &= \tau \left( \widehat{W}\times \widehat{W}\times \cdots\times \widehat{W} \right)\\
 	&= \tau_3(\widehat{W})\times\ldots\times \tau_3(\widehat{W}) \\
 	&= \mathcal{V}_{\mathbb{F}}(I) \times \mathcal{V}_{\mathbb{F}}(I) \times \cdots \times \mathcal{V}_{\mathbb{F}}(I) \\
 	&= \mathcal{V}_{\mathbb{F}}(J)\;,
 	\end{split}
 \end{equation}
 where the last two equalities hold due to Proposition \ref{prop: singlequadeq} and Equation \eqref{eq:varietyext}.
 \end{proof}

\subsection{Linear equations}

	Let $S$ be a standard-form system containing $q$ quadratic equations. Due to definition \ref{def:standardform}, $S$ is a system in exactly $3q$ variables. We can thus write $S \subset \mathbb{F}[x_1, \ldots, x_{3q}]$.

\begin{Rem}\label{rem:invproj}
Consider a linear polynomial $f$ in $\mathbb{F}[x_1,\ldots,x_{3q}]$ for some value of $q \in \mathbb{Z}^+$. We can write $f = \sum_{i=1}^{3q}a_ix_i + \delta$ and define the vector of its coefficients $a_f = (a_1,\ldots,a_{3q}) \in \mathbb{F}^{3q}$. Notice that the vector $a_f$ contains only the coefficients of $x_1,\ldots,x_{3q}$ and not the term $\delta$. 
With this notation we observe that $\bar{w} \in \mathbb{F}^{3q}$ belongs to $\mathcal{V}_{\mathbb{F}}(\langle f \rangle)$ if and only if the product $\bar{w}\cdot a_f^\top = \delta$.
\end{Rem}
The reduction introduced in Section \ref{sec: quadratic} deals with standard-form systems that include only quadratic equations. This reduction is formalised in Proposition \ref{prop: multiplequadeq} as a map taking as input a system of $q$ equations in $3q$ variables and outputting an MLD instance corresponding to a $3q\times 10q$ parity-check matrix. To deal with linear equations we need a map $\nu$ sending a linear polynomial in $\mathbb{F}[x_1,\ldots,x_{3q}]$  to a vector in $\mathbb{F}^{10q}$.  We define $\nu$ as 
\begin{equation}\nonumber
	\nu(f) = a_fM_{\tau}^\top\;,
\end{equation} 
with $M_{\tau}$ as in \eqref{eq def Mtau}.
\begin{Ex}
	Assume $q=2$, then we are working in $\mathbb{F}[x_1,\ldots,x_{6}]$.
	Let $f = x_1 + x_3 + x_5$ and $a_f = (1,0,1,0,1,0)$. Since $q=2$ then $M_{\tau}^\top$ is the matrix
	$$
			M_{\tau}^\top = 
		\left[\begin{array}{cc|cc} 
	\mathbb{I}_3 & \mathbf{0}& \mathbf{0}& \mathbf{0}\\
	\hline 
	\mathbf{0}& \mathbf{0} &\mathbb{I}_3 &\mathbf{0}
\end{array}\right]\;.
	$$
We obtain
	\begin{equation}\nonumber
		\nu(f) = a_fM_{\tau}^\top = (1,0,1,0,0,0,0,0,0,0,0,1,0,0,0,0,0,0,0,0)
	\end{equation}
\end{Ex}
The following lemma will be used to prove the correctness of our reduction.
\begin{Lemma}\label{lem:linvariery}
Let $f = \sum_{i=1}^{3q}a_ix_i + \delta \in \mathbb{F}[x_1,\ldots,x_{3q}]$ be a linear polynomial. Let $\widetilde{v} \in \mathbb{F}^{10q}$.  $\widetilde{v}\cdot\nu(f)^\top =\delta$ if and only if
$
	\tau(\widetilde{v}) \in \mathcal{V}_{\mathbb{F}}(\langle f \rangle).
$
\end{Lemma}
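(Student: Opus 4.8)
The plan is to collapse the whole statement to a single scalar identity and then quote Remark \ref{rem:invproj}. Concretely, I would first prove that for every $\widetilde{v}\in\mathbb{F}^{10q}$
\[
\widetilde{v}\cdot\nu(f)^\top \;=\; \tau(\widetilde{v})\cdot a_f^\top .
\]
Granting this, the lemma is immediate: since the two sides are equal scalars, $\widetilde{v}\cdot\nu(f)^\top=\delta$ holds if and only if $\tau(\widetilde{v})\cdot a_f^\top=\delta$; and by Remark \ref{rem:invproj}, a vector $\bar w\in\mathbb{F}^{3q}$ lies in $\mathcal{V}_{\mathbb{F}}(\langle f\rangle)$ exactly when $\bar w\cdot a_f^\top=\delta$, so applying this with $\bar w=\tau(\widetilde{v})$ gives the claimed equivalence.

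To establish the displayed identity I would simply unwind the definitions. By definition $\nu(f)=a_f M_{\tau}^\top$, hence $\nu(f)^\top=\bigl(a_f M_{\tau}^\top\bigr)^\top=M_{\tau}\,a_f^\top$. Substituting and using associativity of matrix multiplication,
\[
\widetilde{v}\cdot\nu(f)^\top \;=\; \widetilde{v}\,M_{\tau}\,a_f^\top \;=\; \bigl(\widetilde{v}\,M_{\tau}\bigr)\,a_f^\top .
\]
Finally, $\widetilde{v}\,M_{\tau}=\tau(\widetilde{v})$ by the matrix representation of the truncation map recorded in \eqref{eq def Mtau} (equivalently by Lemma \ref{lem:projectionextended}), which yields precisely $\widetilde{v}\cdot\nu(f)^\top=\tau(\widetilde{v})\cdot a_f^\top$.

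I do not expect any genuine obstacle here; the only thing to keep straight is the bookkeeping of row versus column vectors and the resulting placement of transposes, together with the block shapes that make every product above a well-defined scalar: $M_{\tau_3}$ is $10\times 3$, so $M_{\tau}$ is $10q\times 3q$, $\nu(f)=a_f M_{\tau}^\top\in\mathbb{F}^{10q}$, and $\tau(\widetilde{v})=\widetilde{v}M_{\tau}\in\mathbb{F}^{3q}$. With that in place, the proof is a two-line computation followed by a citation of Remark \ref{rem:invproj}.
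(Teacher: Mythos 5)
Your proof is correct and is essentially identical to the paper's: both unwind $\nu(f)=a_fM_{\tau}^\top$ to get $\widetilde{v}\cdot\nu(f)^\top=\widetilde{v}M_{\tau}a_f^\top=\tau(\widetilde{v})\cdot a_f^\top$ and then conclude via Remark \ref{rem:invproj}. Your extra care with the block shapes (e.g.\ $M_{\tau}$ being $10q\times 3q$) is sound and only makes the bookkeeping more explicit than the paper's one-line computation.
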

\begin{proof}
	By Remark \ref{rem:invproj}, observing that 
	\begin{equation}\nonumber
		\widetilde{v}\cdot\nu(f)^\top = \widetilde{v}\cdot \left( a_fM_{\tau}^\top\right)^\top =
		\widetilde{v}M_{\tau}a_f^\top=
		\left( \widetilde{v} M_{\tau}\right)\cdot a_f^\top = \tau(\widetilde{v})\cdot a_f^\top = f(\tau(\widetilde{v}))\;.
	\end{equation}
\end{proof}

We construct now the MLD instance $\left( H,s,t \right)$ for a general MQ system. The basic idea is to see the parity-check matrix $\widetilde{H}$ we built so far as a matrix of coefficients for an equation system. Adding rows to the matrix means adding new equations to the system and thus reducing the solution space. 
Consider a standard-form system $S \in \mathbb{F}[x_1,\ldots, x_{3q}]$ containing $q$ quadratic equations and $\lambda$ linear equations, as in definition \ref{def:standardform}.  Let $K = \langle S \rangle$ and $\mathcal{V}_{\mathbb{F}}(K) \subset \mathbb{F}^{10q}$ be its variety. Denote by $S_Q \subset S$ the subset of quadratic equations in $S$, and by $\langle S_Q \rangle$ the ideal generated by it.
 
 Let also $\left( \widetilde{H}, \widetilde{s}, \widetilde{t}\right)$ be the MLD instance corresponding to $S_Q$. Let $\lbrace f_1,\ldots,f_{\lambda} \rbrace \subset S$ be the set of linear equations of $S$.
We build a parity-check matrix $H$ as follows
\begin{equation}\label{eq:final H}
	H = \begin{bmatrix}
		\widetilde{H}\\
		\nu(f_1)\\
		\vdots\\
		\nu(f_\lambda)
	\end{bmatrix} \in \mathbb{F}^{(7q+\lambda) \times 10q}
\end{equation}
Consider the following syndrome vector
\begin{equation}\nonumber
	s = \widetilde{s} \| \delta_1 \| \cdots \| \delta_{\lambda}\;,
\end{equation}
where $f_i(0)=\delta_i$ for any $i$,
and set $t = \widetilde{t} = 3q$.

Consider furthermore the set $W = \lbrace \widetilde{v} \in \widetilde{W} \mid \widetilde{v}\cdot \nu(f_i)^\top = \delta_i \;\forall i=1,\ldots,\lambda  \rbrace\subset \widetilde{W}$. 

\begin{theorem}\label{th:final}
$W$ is the set of witnesses for the MLD instance $\left(H,s,t\right)$. Moreover $\mathcal{V}_{\mathbb{F}}(K) = \tau\left( W \right)$.
\end{theorem}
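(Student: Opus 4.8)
The plan is to treat the two claims separately, in each case reducing to the quadratic-only case already settled in Proposition~\ref{prop: multiplequadeq} together with Lemma~\ref{lem:linvariery}; no new machinery should be needed.

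For the first claim I would unwind the definition of a witness. A vector $v\in\mathbb{F}^{10q}$ is a witness of $(H,s,t)$ exactly when $Hv^\top=s^\top$ and $\mathrm{w}(v)\le t$. Reading off the block form of $H$ in \eqref{eq:final H} and of $s=\widetilde{s}\|\delta_1\|\cdots\|\delta_\lambda$, the single matrix equation $Hv^\top=s^\top$ is equivalent to the conjunction of $\widetilde{H}v^\top=\widetilde{s}^\top$ with the $\lambda$ scalar conditions $v\cdot\nu(f_i)^\top=\delta_i$, $i=1,\ldots,\lambda$. Since $t=\widetilde{t}$, the equation $\widetilde{H}v^\top=\widetilde{s}^\top$ together with $\mathrm{w}(v)\le t$ says precisely that $v$ lies in the witness set of $(\widetilde{H},\widetilde{s},\widetilde{t})$; by Proposition~\ref{prop: multiplequadeq}, and recalling that $\widetilde{W}=\{v\in\widetilde{\Sigma}\mid\mathrm{w}(v)\le\widetilde{t}\}$ with $\widetilde{\Sigma}$ the full solution coset of $\widetilde{H}v^\top=\widetilde{s}^\top$, this witness set equals $\widetilde{W}$. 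Hence $v$ is a witness of $(H,s,t)$ if and only if $v\in\widetilde{W}$ and $v\cdot\nu(f_i)^\top=\delta_i$ for all $i$, which is exactly the definition of $W$.

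For the second claim, $\mathcal{V}_{\mathbb{F}}(K)=\tau(W)$, I would argue by double inclusion; notably no injectivity of $\tau$ is needed. For $\tau(W)\subseteq\mathcal{V}_{\mathbb{F}}(K)$: if $v\in W$ then $v\in\widetilde{W}$, so Proposition~\ref{prop: multiplequadeq} gives $\tau(v)\in\mathcal{V}_{\mathbb{F}}(\langle S_Q\rangle)$, i.e. $\tau(v)$ satisfies every quadratic equation of $S$; moreover $v\cdot\nu(f_i)^\top=\delta_i$ together with Lemma~\ref{lem:linvariery} gives $\tau(v)\in\mathcal{V}_{\mathbb{F}}(\langle f_i\rangle)$ for each $i$, so $\tau(v)$ also satisfies every linear equation of $S$; hence $\tau(v)\in\mathcal{V}_{\mathbb{F}}(K)$. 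For the reverse inclusion, take $w\in\mathcal{V}_{\mathbb{F}}(K)$; in particular $w\in\mathcal{V}_{\mathbb{F}}(\langle S_Q\rangle)=\tau(\widetilde{W})$ by Proposition~\ref{prop: multiplequadeq}, so $w=\tau(v)$ for some $v\in\widetilde{W}$; since $w$ also satisfies each $f_i$ we have $\tau(v)\in\mathcal{V}_{\mathbb{F}}(\langle f_i\rangle)$, whence Lemma~\ref{lem:linvariery} yields $v\cdot\nu(f_i)^\top=\delta_i$ for every $i$; thus $v\in W$ and $w=\tau(v)\in\tau(W)$.

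The main difficulty here is not conceptual but a matter of careful bookkeeping: confirming that $\widetilde{W}$ is the \emph{entire} witness set of the quadratic-only instance (so that the block equation splits cleanly and intersecting with the linear constraints behaves as claimed), and correctly tracking the constants $\delta_i=f_i(0)$ and the orientation of Lemma~\ref{lem:linvariery}, which over $\mathbb{F}_2$ reads $v\cdot\nu(f_i)^\top=\delta_i\iff\tau(v)\in\mathcal{V}_{\mathbb{F}}(\langle f_i\rangle)$. Once these points are settled, both assertions follow directly from Proposition~\ref{prop: multiplequadeq} and Lemma~\ref{lem:linvariery}.
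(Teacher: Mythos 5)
Your proof is correct and follows essentially the same route as the paper: split $H$ into the quadratic block $\widetilde{H}$ plus the rows $\nu(f_i)$, invoke Proposition~\ref{prop: multiplequadeq} for the quadratic part and Lemma~\ref{lem:linvariery} for the linear part, and prove the variety identity by double inclusion. In fact your treatment of the first claim is slightly more complete than the paper's, which only verifies $W\subseteq$ (witness set); your observation that $\widetilde{\Sigma}=\widetilde{\epsilon}+\ker\widetilde{H}$ is the \emph{entire} solution set of $\widetilde{H}v^{\top}=\widetilde{s}^{\top}$ is exactly what is needed to get the reverse inclusion and hence the asserted set equality.
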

\begin{proof}
We need to prove that given $\widetilde{v} \in W$ it holds $H\widetilde{v}^\top = s^\top$ and  $\mathrm{w}(\widetilde{v}) \le t$.
By definition of $W$, we obtain that $\widetilde{v}\in W$ implies $\widetilde{v} \in \widetilde{W}$, which means $\mathrm{w}(\widetilde{v}) \le \widetilde{t}$ and also $\widetilde{H}\widetilde{v}^\top = \widetilde{s}^\top$. Moreover we have $\widetilde{v}\cdot\nu(f_i)^\top = \delta_i$ for every $i=1,\ldots,\lambda$ implying 
\begin{equation}\nonumber
	\begin{bmatrix}
		\widetilde{H}\\
		\nu(f_1)\\
		\vdots\\
		\nu(f_{\lambda})
	\end{bmatrix}\widetilde{v}^\top = \begin{pmatrix}\widetilde{s}\\\delta_1\\\vdots\\\delta_{\lambda}\end{pmatrix} = s^\top\;.
\end{equation}

Due to lemma \ref{lem:linvariery} we have $\tau(\widetilde{v}) \in \mathcal{V}_{\mathbb{F}}(\langle f_i\rangle)$ for $i=1,\ldots,\lambda$. Therefore $\tau(\widetilde{v}) \in \bigcap_{i=1}^{\lambda} \mathcal{V}_{\mathbb{F}}(\langle f_i\rangle) = \mathcal{V}_{\mathbb{F}}(\langle f_1,\ldots,f_{\lambda}\rangle)$. Since by Proposition \ref{prop: multiplequadeq} we have $\tau(\widetilde{v}) \in \mathcal{V}_{\mathbb{F}}(S_q)$, then $\tau(\widetilde{v})\in \mathcal{V}_{\mathbb{F}}(\langle S_q \rangle) \cap \mathcal{V}_{\mathbb{F}}(\langle f_1,\ldots,f_{\lambda}\rangle) = \mathcal{V}_{\mathbb{F}}(\langle S \rangle) = \mathcal{V}_{\mathbb{F}}(K)$.

On the other hand let $z \in \mathcal{V}_{\mathbb{F}}(\langle K \rangle) = \mathcal{V}_{\mathbb{F}}(\langle S_q \rangle) \cap \mathcal{V}_{\mathbb{F}}(\langle f_1,\ldots,f_{\lambda}\rangle) = \tau(\widetilde{W})\cap \mathcal{V}_{\mathbb{F}}(\langle f_1,\ldots,f_{\lambda}\rangle) \subseteq \tau(\widetilde{W})$, where the last equality comes from Proposition \ref{prop: multiplequadeq}.
Therefore there exists $z' \in \tau(\widetilde{W})$ such that $z = \tau(z')$. 
Since $a_{f_i}\cdot z^\top = a_{f_i}\cdot \tau(z')^\top = \delta_i$ for every $i=1,\ldots,\lambda$, this implies $\nu(a_{f_i})\cdot z'^\top = \delta_i$ and therefore $z \in \tau(W)$.
\end{proof}

We can now define the map $\beta : \mathrm{MQ} \rightarrow \mathrm{MLD}$ as follows
\begin{equation}\nonumber
\beta(S) = \left( H,s,t\right)\;
\end{equation}
where $H$, $s$ and $t$ are as in Theorem \ref{th:final}. We now prove that a witness of such instance can be transformed into a witness of $S$ by applying the truncation $\tau$.

\begin{theorem}
Let $I_{\mathrm{MQ}} =S$ where $S$ is a standard form system of quadratic equations. If $\widetilde{v} \in \mathbb{F}^{10q}$ is a solution of $\beta(I_{\mathrm{MQ}})$ then $\tau(\widetilde{v})$ is a solution of $I_{\mathrm{MQ}}$.
\end{theorem}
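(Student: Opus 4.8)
The plan is to obtain this statement as an essentially immediate corollary of Theorem \ref{th:final}, which already contains all the substance. First I would unwind the definitions. By construction, $\beta(I_{\mathrm{MQ}})=(H,s,t)$ with $H$ as in \eqref{eq:final H}, $s=\widetilde{s}\|\delta_1\|\cdots\|\delta_\lambda$, and $t=\widetilde{t}=3q$; saying that $\widetilde{v}\in\mathbb{F}^{10q}$ is a solution of $\beta(I_{\mathrm{MQ}})$ therefore means exactly that $H\widetilde{v}^\top=s^\top$ and $\mathrm{w}(\widetilde{v})\le t$, i.e. that $\widetilde{v}$ is a witness of the MLD instance $(H,s,t)$.

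Next I would invoke the first assertion of Theorem \ref{th:final}, that $W$ is precisely the witness set of $(H,s,t)$; hence $\widetilde{v}\in W$. Then I apply the second assertion of the same theorem, $\mathcal{V}_{\mathbb{F}}(K)=\tau(W)$ with $K=\langle S\rangle$, to conclude $\tau(\widetilde{v})\in\tau(W)=\mathcal{V}_{\mathbb{F}}(K)$. By definition of the variety, this says that $\tau(\widetilde{v})$ annihilates every polynomial of $S$, which is precisely what it means for $\tau(\widetilde{v})$ to be a solution of the system $S=I_{\mathrm{MQ}}$. The only bookkeeping point worth a sentence is that $\tau:\mathbb{F}^{10q}\to\mathbb{F}^{3q}$ and that, since $S$ is in standard form, it involves exactly the $3q$ variables $x_1,\dots,x_{3q}$ (as recorded at the beginning of Section \ref{sec:MQ-MLD}), so that evaluating $S$ at $\tau(\widetilde{v})\in\mathbb{F}^{3q}$ is meaningful.

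Since every step is a direct appeal to a previously established result, I do not expect any real obstacle: the theorem merely re-packages Theorem \ref{th:final} in the language of solutions rather than of witness sets and varieties. If anything, the mild subtlety is purely presentational — making sure the hypothesis "$S$ is a standard form system" is exactly the hypothesis under which $(H,s,t)$ was defined. I would also close with the remark that, composed with the standard-form reduction of Lemma \ref{lem:systostandard} (together with Fact \ref{fact: linear standard}), this gives a solution-recovery procedure for $\beta$ on \emph{arbitrary} MQ instances, not only those already in standard form, thereby completing the verification that $\beta$ is a genuine reduction.
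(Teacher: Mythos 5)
Your argument is correct and is essentially the paper's own proof: both reduce the statement to Theorem \ref{th:final}, identifying a solution of $\beta(I_{\mathrm{MQ}})$ with an element of the witness set $W$ and then using $\mathcal{V}_{\mathbb{F}}(K)=\tau(W)$ to conclude. The extra bookkeeping you include (that $S$ lives in exactly the $3q$ variables matched by $\tau$) is a reasonable clarification but not a departure from the paper's route.
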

\begin{proof}
Let $K = \langle S \rangle$ and apply theorem \ref{th:final}. If $\widetilde{v}$ solves $\beta(I_{\mathrm{MQ}}) = \left(H,s,t \right)$ then $\tau(\widetilde{v}) \in \mathcal{V}_{\mathbb{F}}(K)$, i.e. $\tau(\widetilde{v})$ is a solution of $S$.
\end{proof}

\begin{theorem}
Given a MQ system $S \in \mathbb{F}[x_1,\ldots,x_{\nmq}]$ consisting of $\mmq$ equations, the reduction $\beta$ runs in polynomial space bounded by $\mathcal{O}(\nmq^4\mmq^2)$. 
\end{theorem}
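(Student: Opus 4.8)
The plan is to track the bit-size of every object built by $\beta$ and show that the bottleneck is the output parity-check matrix $H$ of Equation~\eqref{eq:final H}. Given a general MQ system $S$ with $\mmq$ equations in $\nmq$ variables, the map $\beta$ first brings $S$ to standard form by the procedure of Lemma~\ref{lem:systostandard} (no preliminary quadratization is needed, since an MQ system already has degree at most $2$). By that lemma the standard-form system has $q \le \mmq\binom{\nmq}{2} = \mathcal{O}(\mmq\nmq^2)$ quadratic equations, $\lambda \le \mmq\!\left(\tfrac{3\nmq^2-\nmq}{2}-2\right) = \mathcal{O}(\mmq\nmq^2)$ linear equations, and (by Definition~\ref{def:standardform}) exactly $3q=\mathcal{O}(\mmq\nmq^2)$ variables; since each of its equations has a bounded number of monomials, it is stored in $\mathcal{O}(\mmq\nmq^2\log(\mmq\nmq))$ bits, which is negligible compared to the target bound.

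Next I would read off the dimensions of $\beta(S)=(H,s,t)$ from Section~\ref{sec: quadratic} and Equation~\eqref{eq:final H}. The matrix $\widetilde H$ is the $7q\times 10q$ block-diagonal matrix formed from $q$ copies of the fixed $7\times 10$ parity-check matrix $\widehat H$ of the code generated by $\widehat G$, and $H$ is obtained by appending the $\lambda$ rows $\nu(f_i)=a_{f_i}M_\tau^\top\in\mathbb{F}^{10q}$, so $H\in\mathbb{F}^{(7q+\lambda)\times 10q}$. Hence the complexity parameters of $\beta(S)$ are $n=10q=\mathcal{O}(\mmq\nmq^2)$ and $m=7q+\lambda=\mathcal{O}(\mmq\nmq^2)$, and by the MLD size formula $\abs{\beta(S)}=nm+m+\lfloor\log_2 n\rfloor+1=\mathcal{O}(nm)=\mathcal{O}(\mmq^2\nmq^4)$, which is exactly the claimed $\mathcal{O}(\nmq^4\mmq^2)$. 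It remains only to observe that no intermediate stage exceeds this: $\widehat H$ is a single fixed matrix of size $\mathcal{O}(1)$, $\widetilde H$ is a submatrix of $H$, each $\nu(f_i)$ is one matrix--vector product storable in $\mathcal{O}(q)$ bits, and the syndrome $s=\widetilde s\|\delta_1\|\cdots\|\delta_\lambda$ and the scalar $t=3q$ are dominated by $H$; so the whole reduction runs in space $\mathcal{O}(\mmq^2\nmq^4)$.

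The step I expect to be the crux is purely bookkeeping: checking that the standard-form output of Lemma~\ref{lem:systostandard} is $\mathcal{O}(\mmq\nmq^2)$ simultaneously in its number of quadratic equations, its number of linear equations, and its number of variables, so that both dimensions of $H$ (namely $10q$ and $7q+\lambda$) are $\mathcal{O}(\mmq\nmq^2)$, and confirming that the column blow-up from $3q$ to $10q$ and the row-expansion $f_i\mapsto\nu(f_i)$ introduce only the constant factors $10$ and $7$ rather than anything super-polynomial. Once those bounds are in hand, the result is a direct substitution into the shape of $H$ given by Equation~\eqref{eq:final H}.
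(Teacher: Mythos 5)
Your proposal is correct and follows essentially the same route as the paper's proof: invoke Lemma \ref{lem:systostandard} to bound the number of quadratic equations $q$ and linear equations $\lambda$ of the standard-form system by $\mathcal{O}(\mmq\nmq^2)$, then read off the dimensions $(7q+\lambda)\times 10q$ of the matrix $H$ in \eqref{eq:final H} and multiply to get $\mathcal{O}(\mmq^2\nmq^4)$. The extra bookkeeping you add (checking that intermediate objects and the pair $(s,t)$ are dominated by $H$) matches the paper's remark that only the construction of $H$ need be estimated.
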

\begin{proof}
Recall that by Lemma \ref{lem:systostandard} we can transform $S$ into a standard form system $S'$ in $\mathcal{O}(\nmq^2\mmq)$ operations. This process produces $S' = S'_Q \cup S'_L$ where $S'_Q$ and $S'_L$ are sets of quadratic and linear equations, namely. Let $q= \abs{S'_Q}
\leq
\mmq\left(\frac{\nmq(\nmq-1)}{2}\right)\leq \mmq\nmq^2$
and $\lambda= \abs{S'_L}
\leq 
	\mmq\left(\frac{3\nmq^2-\nmq}{2}-2\right)\leq \frac{3}{2}\mmq\nmq^2$.
We estimate only the construction of the matrix $H$ since $s$ and $t$ have a significant smaller size. 
\\
The matrix $H$ generated via $\beta$ has dimension $(7q+\lambda) \times 10q$ as in \eqref{eq:final H}, therefore it takes space at most $$(7(\mmq\nmq^2) + \frac{3}{2}\mmq\nmq^2) \cdot 10\mmq\nmq^2=
85\mmq^2\nmq^4
\in \mathcal{O}(\mmq^2\nmq^4)$$.
\end{proof}

\section{Conclusions}\label{sec: conclusion}
In this work we introduced two polynomial-time reductions: $\alpha$, from MLD to MQ, and $\beta$, from MQ to MLD. Therefore, the composition of $\alpha$ and $\beta$ is a polynomial-time auto-reduction of $\mathrm{MLD}$, while $\alpha\circ\beta$ is a polynomial-time auto-reduction of $\mathrm{MQ}$.
Hence each MLD instance can be solved if we are able to solve each MLD instance defined by $\beta\circ \alpha$ and even more so if we are able to solve those defined by $\beta$. Similarly, each MQ instance can be solved if we are able to solve each MQ instance defined by $\alpha\circ \beta$, or even only defined by $\alpha$.
So if we can decide in polynomial time the existence of solutions for all systems in the image of $\alpha$, then we can solve MQ in polynomial time.
Notice that the same property holds for systems in standard form.
So we can identify two families of systems which play a special role in the MQ problem: one is classical and the other comes from our reduction.
\\
In the case of MLD there exist families of codes that plays a similar role in the MLD context, for example the one obtained via the reductions in \cite{Berlekamp} and the one obtained with our results.
We can formalise this in the following theorem.

\begin{theorem}\label{thm standard mld}
Let $\mathcal{C}$ be the family of codes defined by parity-check matrices as in \eqref{eq:final H}. \\
If we can solve all MLD instances for $\mathcal{C}$ in polynomial time, then we can solve in polynomial time all instances of MLD, and so, P=NP.
\end{theorem}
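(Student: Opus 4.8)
The plan is to compose the two reductions already built, exploiting the fact that the parity-check matrices produced by $\beta$ are, by construction, exactly those of the form \eqref{eq:final H}, i.e. those defining codes in $\mathcal{C}$. Suppose we are given a polynomial-time algorithm $\mathcal{A}$ that decides every MLD instance whose parity-check matrix has the shape \eqref{eq:final H}. Given an arbitrary MLD instance $I_{\mathrm{MLD}}=(\bar H,\bar s,\bar t)$ with complexity parameters $(n,m)$, I would: first compute $I_{\mathrm{MQ}}=\alpha_{n,m}(I_{\mathrm{MLD}})$, a genuine MQ instance whose complexity parameters are polynomial in $\abs{I_{\mathrm{MLD}}}$ by the complexity results of Section \ref{sec:MLD-MQ}; then bring $I_{\mathrm{MQ}}$ to a standard-form system $S$ via Lemma \ref{lem:systostandard} and Fact \ref{fact: linear standard}, at the cost of an $\mathcal{O}(\nmq^2\mmq)$ overhead; and finally apply $\beta$ to $S$ to obtain an MLD instance $I'=\beta(S)$. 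Since $\beta$ always outputs a parity-check matrix of the form \eqref{eq:final H}, $I'$ is an MLD instance for $\mathcal{C}$, so $\mathcal{A}(I')$ terminates in time polynomial in $\abs{I'}$, hence polynomial in $\abs{I_{\mathrm{MLD}}}$.

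Next I would check that this pipeline preserves the (yes/no) answer. The reduction $\alpha$ is correct in both directions: a witness of $\alpha_{n,m}(I_{\mathrm{MLD}})$ restricts (on the first $n$ coordinates) to a witness of $I_{\mathrm{MLD}}$, as shown in Section \ref{sec:MLD-MQ}; conversely any $\bar v$ solving $I_{\mathrm{MLD}}$ extends to a solution of $\alpha_{n,m}(I_{\mathrm{MLD}})$ by evaluating the counter polynomials $a^{(i)}_j(\bar v)$ and the auxiliary variables introduced by $\quadr(\cdot)$. Hence $I_{\mathrm{MLD}}$ is a YES instance if and only if $\alpha_{n,m}(I_{\mathrm{MLD}})$ is. Passing to standard form does not change satisfiability, and by Theorem \ref{th:final} the standard-form system $S$ satisfies $\mathcal{V}_{\mathbb{F}}(\langle S\rangle)=\tau(W)$, where $W$ is the witness set of $I'=\beta(S)$; in particular $S$ has a solution if and only if $I'$ is a YES instance. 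Chaining these equivalences shows that $\mathcal{A}(I')$ outputs the correct answer for $I_{\mathrm{MLD}}$, and the whole procedure runs in polynomial time.

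To conclude, I would invoke the NP-completeness of MLD \cite{Berlekamp}: the procedure above decides every MLD instance in polynomial time, hence every language in NP is decided in polynomial time, i.e. $\mathrm{P}=\mathrm{NP}$.

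\textbf{Main obstacle.} The difficulty is bookkeeping rather than conceptual: one must verify that every stage of the pipeline produces an output of size polynomial in the original input, so that the composition stays polynomial, and — crucially — that the standard-form transformation sketched in Lemma \ref{lem:systostandard} is precisely the preprocessing that $\beta$ expects, so that the final matrix genuinely has the form \eqref{eq:final H} and the instance $I'$ genuinely lies in $\mathcal{C}$. Once the size bounds coming from the complexity results of Section \ref{sec:MLD-MQ}, from Lemma \ref{lem:systostandard}, and from the $\mathcal{O}(\nmq^4\mmq^2)$ space bound on $\beta$ are combined, the argument closes.
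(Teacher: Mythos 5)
Your proposal is correct and follows essentially the same route the paper intends: the theorem is stated as a formalisation of the preceding discussion, namely that $\beta\circ\alpha$ (with the standard-form conversion of Lemma \ref{lem:systostandard} in between) maps any MLD instance in polynomial time to an answer-preserving MLD instance whose parity-check matrix has the form \eqref{eq:final H}, so a polynomial-time solver for $\mathcal{C}$ yields one for all of MLD and hence, by NP-completeness of MLD, P=NP. Your write-up is in fact more explicit than the paper's, in particular in noting that the converse direction of $\alpha$'s correctness (extending a witness of $I_{\mathrm{MLD}}$ to one of $\alpha_{n,m}(I_{\mathrm{MLD}})$ via the counter polynomials) is needed and holds.
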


Regarding the relation between MLD and MQ, we remark that two NP-complete problems might be not isomorphic (see \cite{berman1977isomorphisms} for a formal definition of polynomial-time isomorphism and for the Berman-Hartmanis conjecture on isomorphic NP problems). 
We rephrase here \cite[~Th.1]{berman1977isomorphisms} since it is needed in our subsequent discussion. 
\begin{theorem}[{\cite[~Th.1]{berman1977isomorphisms}}]\label{thm original isomorphism}
If there are two length-increasing invertible p-reductions, one of $A$ to $B$ and the other of $B$ to $A$, then $A$ and $B$ are isomorphic.
\end{theorem}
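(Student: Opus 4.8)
The plan is to recover the effective Cantor--Schr\"oder--Bernstein construction of Berman and Hartmanis. Write $f$ for the length-increasing invertible p-reduction of $A$ to $B$ and $g$ for the one of $B$ to $A$, so that $x\in A\iff f(x)\in B$ and $y\in B\iff g(y)\in A$, with $f,g$ injective, $\abs{f(x)}>\abs{x}$, $\abs{g(y)}>\abs{y}$, and $f^{-1},g^{-1}$ computable in polynomial time on their domains. A first observation I would record is that $\mathrm{range}(f)$ and $\mathrm{range}(g)$ are decidable in polynomial time: to test $w\in\mathrm{range}(f)$ one computes $f^{-1}(w)$ and checks whether $f(f^{-1}(w))=w$, and symmetrically for $g$.

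Next I would organise $\Sigma^{*}$ into chains under the partial maps $f$ and $g$, viewing the domain of the sought isomorphism as the ``$A$-universe'' and its codomain as the ``$B$-universe''. Given a string $w$, I trace its pre-images backwards, alternately applying $f^{-1}$ and $g^{-1}$ as far as the range conditions allow. The key simplification over the classical setting is that, because $f$ and $g$ are length-increasing, every backward step strictly shortens the string, so the backward trace halts after at most $\abs{w}$ steps at a unique \emph{origin}; in particular there are no bi-infinite chains. Each chain is then classified by whether its origin lies outside $\mathrm{range}(g)$ (an \emph{$f$-origin}) or outside $\mathrm{range}(f)$ (a \emph{$g$-origin}), and this classification is computable in polynomial time, since tracing costs at most $\abs{w}$ evaluations of $f^{-1}$ and $g^{-1}$, each itself polynomial.

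I would then define the candidate isomorphism $h$ by following $f$ forward on chains with an $f$-origin and $g^{-1}$ backward on chains with a $g$-origin: explicitly $h(x)=f(x)$ when $x$ has an $f$-origin, and $h(x)=g^{-1}(x)$ otherwise, the latter being well defined because a $g$-origin forces $x\in\mathrm{range}(g)$. Verification splits into three parts. Bijectivity is exactly the Cantor--Schr\"oder--Bernstein pairing, as $h$ restricted to each chain shifts one copy of $\Sigma^{*}$ onto the other without overlaps or gaps. Polynomial-time computability of both $h$ and $h^{-1}$ follows from the length-increasing bound on the chain-tracing cost. Finally, membership is preserved: on an $f$-origin chain $h(x)=f(x)$ and $x\in A\iff f(x)\in B$; on a $g$-origin chain, setting $y=g^{-1}(x)$ we have $g(y)=x$, so the reduction property of $g$ yields $y\in B\iff x\in A$, that is $h(x)\in B\iff x\in A$.

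I expect the main obstacle to be essentially bookkeeping: making the origin classification genuinely polynomial-time and checking that the two branches of $h$ agree on their common boundary and jointly exhaust $\Sigma^{*}$. Every one of these points rests on the length-increasing hypothesis, which simultaneously caps the number of backward steps at $\abs{w}$ and eliminates the bi-infinite chains that would otherwise block an effective Cantor--Schr\"oder--Bernstein pairing; this is precisely why mere invertible p-reductions, without the length condition, would not suffice.
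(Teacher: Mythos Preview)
Your sketch is the standard Berman--Hartmanis effective Cantor--Schr\"oder--Bernstein argument and is correct as written. Note, however, that the paper does not give its own proof of this statement: it is quoted verbatim from \cite{berman1977isomorphisms} solely so that the authors can check its hypotheses against their maps $\alpha$ and $\beta$. There is therefore nothing in the paper to compare your argument against; you have supplied exactly the proof the reader would find by following the citation.
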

% According to the previous theorem, the existence of two invertible length-increasing polynomial-time reductions guarantees the isomorphism between two problems. 
In our case $A$ and $B$ are MLD and MQ, while the two reductions are $\alpha$ and $\beta$, which are polynomial-time length-increasing reductions. However, only $\alpha$ can be inverted, since $\beta$ requires the reduction of MQ instances into standard-form systems (a many-to-one reduction), hence the hypotheses of Theorem~\ref{thm original isomorphism} are not completely satisfied. To obtain a one-to-one reduction from MQ to MLD, we can modify the definition of standard-form systems by adding new equations containing the information about the original MQ instance. For example, we can consider an additional set of $\mmq\left(\binom{\nmq}{2}+\nmq+1\right)$ equations and new variables of the form
$$
\left\{
\begin{array}{ll}
\bar{\gamma}_{ij}^{(h)}+\gamma_{ij}^{(h)}=0& \qquad 1\leq h\leq \mmq, 1\leq i<j\leq \nmq\\
\bar{\lambda}_{i}^{(h)}+\lambda_{i}^{(h)}=0& \qquad 1\leq h\leq \mmq,1\leq i\leq \nmq\\
\bar{\delta}^{(h)}+\delta^{(h)}=0&\qquad 1\leq h\leq \mmq\\
\end{array}
\right.
$$
namely, these equations specify the monomials' coefficients of the polynomials in the original MQ instance (see \eqref{eq: MQ instance}).
\\
Once we modify the definition of standard-form instance (and $\beta$ accordingly), we obtain a one-to-one reduction from MQ to MLD. In this way both $\alpha$ and $\beta$ satisfy the hypotheses of Theorem \ref{thm original isomorphism}, thus proving the existence of an isomorphism between MLD and MQ.
\begin{theorem}
MLD and MQ are isomorphic.
\end{theorem}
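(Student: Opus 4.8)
The plan is to deduce the statement from Theorem \ref{thm original isomorphism}, so the entire task reduces to exhibiting two length-increasing invertible polynomial-time reductions, one in each direction, between MLD and MQ. The reduction $\alpha$ constructed in Section \ref{sec:MLD-MQ} already has all three properties: it runs in polynomial time (its output complexity parameters are $\mathcal{O}(n\log_2^2 n)$), it is length-increasing because the resulting MQ instance is strictly larger than the input MLD instance, and it is invertible since $\alpha_{n,m}$ is a fixed template depending only on $(n,m)$ whose evaluation on $(\bar{H},\bar{s},\bar{t})$ merely plants the entries of $\bar{H}$, $\bar{s}$ and the bits of $\bar{t}$ into designated coefficient slots of $\pcce_{n,m}$ and $\wce_{n,m}$; reading those slots off any system in the image of $\alpha$ recovers $(\bar{H},\bar{s},\bar{t})$ uniquely. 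Thus the first half of the hypothesis of Theorem \ref{thm original isomorphism} is met with no further work.

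The obstacle is the reverse direction. The map $\beta$ of Section \ref{sec:MQ-MLD} is polynomial-time and length-increasing, but it is defined only after passing to a standard-form system, and this preprocessing is many-to-one (distinct MQ systems may land on the same standard form), so $\beta$ as stated is not injective and Theorem \ref{thm original isomorphism} does not apply directly. To remedy this I would enlarge the notion of standard-form instance exactly as sketched in the discussion preceding the statement: to every standard-form system obtained from an MQ instance $S$ one appends the $\mmq\left(\binom{\nmq}{2}+\nmq+1\right)$ linear equations $\bar{\gamma}_{ij}^{(h)}+\gamma_{ij}^{(h)}=0$, $\bar{\lambda}_{i}^{(h)}+\lambda_{i}^{(h)}=0$, $\bar{\delta}^{(h)}+\delta^{(h)}=0$ in fresh variables, whose right-hand constants spell out the coefficients of each polynomial of $S$ as in \eqref{eq: MQ instance}. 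One then checks three points: (i) these equations do not affect satisfiability, since the new variables are free and each equation only records a constant, so the enlarged instance is still equivalent to $S$ and Theorem \ref{th:final} continues to hold, with $H$ as in \eqref{eq:final H} augmented by the rows $\nu(f)$ of the appended linear equations; (ii) the number of added equations and variables is polynomial in $\abs{I_{\mathrm{MQ}}}$, so the modified $\beta$ is still polynomial-time and a fortiori length-increasing; (iii) the modified $\beta$ is now injective, because from the constants appearing in the syndrome entries attached to the appended rows one reconstructs all the $\gamma_{ij}^{(h)},\lambda_i^{(h)},\delta^{(h)}$, hence the original system $S$.

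With $\alpha$ and the modified $\beta$ both established as length-increasing invertible p-reductions, the conclusion is immediate: Theorem \ref{thm original isomorphism}, applied with $A=\mathrm{MLD}$ and $B=\mathrm{MQ}$, yields that MLD and MQ are polynomial-time isomorphic. I expect the only genuine work, and the place where care is required, to be point (iii) together with re-examining the correctness argument of Section \ref{sec:MQ-MLD} under the enlarged definition of standard form — verifying that the bookkeeping equations truly make the preprocessing one-to-one while leaving the statement and proof of Theorem \ref{th:final} intact; everything else is a direct invocation of the Berman--Hartmanis theorem.
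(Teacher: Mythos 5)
Your proposal is correct and follows essentially the same route as the paper: the paper likewise observes that $\alpha$ is already a length-increasing invertible p-reduction, repairs the non-injectivity of $\beta$ by appending the $\mmq\left(\binom{\nmq}{2}+\nmq+1\right)$ bookkeeping equations recording the coefficients $\gamma_{ij}^{(h)},\lambda_i^{(h)},\delta^{(h)}$ of the original system, and then invokes Theorem~\ref{thm original isomorphism}. Your additional checks (i)--(iii) are exactly the points the paper leaves implicit, so nothing is missing.
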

This isomorphism shows our claimed equivalence and it implies the importance of studying the security of code-based and multivariate-based schemes by meaning of both methods from Coding Theory and Computational Algebra.

\section{Open problems}\label{sec: open problems}
We highlight here a few directions for future works. 
\begin{itemize}
    \item An investigation of the image of $\alpha$. Polynomial systems obtained via $\alpha$ have a special form, and by analysing them new hints on the difficulty of MLD could be determined. Similarly for polynomials coming from $\alpha\circ \beta$. In particular, MLD instances obtained from code-based cryptosystems are of particular interests. We recall that, in terms of MLD, most cryptographic code-based schemes are modeled as triples $(H,s,t)$ with $H$ defined as a permuted version of the parity-check matrix of an algebraic code (for instance, in Classic McEliece \cite{bernstein2017classic}, $H$ hides the parity-check matrix of a binary irreducible Goppa code). Loosely speaking, only who can solve the instance can decrypt a ciphertext, and, to the current knowledge, this is feasible only to those who know the hidden Goppa code. However, some vulnerabilities may be revealed by looking at the associated MQ instance. 
    \item Analogously, for codes coming from the image of $\beta$ and $\beta\circ\alpha$. In particular, similar to above, it would be interesting to focus on MQ instances corresponding to multivariate-based cryptosystems. For instance, in the Digital Signature Scheme Rainbow \cite{Rainbow} the public key is a masked version of a quadratic Boolean polynomial system for which there exists a fast solving algorithm based on Gaussian elimination. It appears that only those who know the original form of the polynomial system are capable of signing messages. However, it may be the case that hidden vulnerabilities are disclosed by applying $\beta$ to these instances.
    \item Even though apparently similar to the directions hinted above, the third future work we propose is even more linked to the security of code-based and multivariate-based cryptosystems. As already introduced, both post-quantum cryptographic families rely on the security of two kinds of problems, the first is the NP-hard problem of solving a generic instance (i.e. MLD for code-based ciphers and MQ for multivariate-based ciphers), the second is the ability of distinguishing between a generic instance and the masked easy-to-solve underlying algebraic instance (e.g. the Goppa code hidden inside a Classic McEliece public key or the multi-level Oil\&Vinegar system hidden inside a Rainbow public key). Instead of blindly applying our reductions to cryptographic public keys, it would be important to model the precise problem of extrapolating the private keys from the public keys, and thus study the complexity of attacking code-based and multivariate-based schemes by understanding their key-generation algorithms.
\end{itemize}

%%%%%%%%%%%%%%%%%%%%%%%%%%%%%%%%%%%%%%%%%%%%%%%%%%%%%%%%%%

% \bmhead{Supplementary information}

% Not applicable

\section*{Acknowledgments}
The publication was created with the co-financing of the European Union -  FSE-REACT-EU, PON Research and Innovation 2014-2020 DM1062 / 2021.
The core of this work is contained in the second author's MSC thesis, supervised by the other authors. Part of this work was presented in 2019 by the first author at the Dipartimento di Ingegneria e Scienze dell'Informazione e Matematica of the University of L'Aquila, Italy.
The first author is a  member of the INdAM Research group GNSAGA.
\\
The authors would like to thank Stefano Baratella and Roberto Sebastiani for their helpful comments.

\renewcommand{\bibname}{References} 
\bibliographystyle{plain}
\bibliography{main}

% common bib file
%% if required, the content of .bbl file can be included here once bbl is generated
%%\input sn-article.bbl

%% Default %%
%%\input sn-sample-bib.tex%

\end{document}